\newcommand{\BIT}{\begin{itemize}}
\newcommand{\EIT}{\end{itemize}}
\newcommand{\BNUM}{\begin{enumerate}}
\newcommand{\ENUM}{\end{enumerate}}
\newcommand\mbb[1]{\mathbb{#1}}
\newcommand\mbf[1]{\mathbf{#1}}
\def\mrm#1{\mathrm{#1}}
\def\reals{\mathbb{R}} 
\def\complex{\mathbb{C}} 
\renewcommand{\exp}[1]{\operatorname{exp}\left(#1\right)} 
\def\indic#1{\mbb{I}\left({#1}\right)} 
\providecommand{\argmax}{\mathop\mathrm{arg max}} 
\providecommand{\argmin}{\mathop\mathrm{arg min}}
\def\Cov{\mrm{Cov}} 
\def\Unif{\textnormal{Unif}}
\newtheorem{theorem}{Theorem}
\newtheorem{lemma}{Lemma}
\newtheorem{assumption}{Assumption}
\newcommand{\algorithmicbreak}{\textbf{break}}
\newcommand{\BREAK}{\STATE \algorithmicbreak}
\newcommand{\algorithmiccontinue}{\textbf{Continue}}
\newcommand{\CONTINUE}{\STATE \algorithmiccontinue}
\newcommand{\Vabc}{\Delta\mathbf{V}}
\newcommand{\Iabc}{\Delta\mathbf{I}}
\newcommand{\Sabc}{\Delta\mathbf{S}}
\newcommand{\Yabc}{\mathbf{Y}}
\newcommand{\Vang}{\Delta \boldsymbol{\theta}}
\newcommand{\vabc}{\Delta\mathbf{v}}
\newcommand{\Vpnz}{\Delta\mathbf{V}^{pnz}}
\newcommand{\Ipnz}{\Delta\mathbf{I}^{pnz}}
\newcommand{\Ypnz}{\mathbf{Y}^{pnz}}
\newcommand{\vpnz}{\Delta\mathbf{v}^{pnz}}
\newcommand{\Bzero}{\mathbf{0}}
\newcommand{\putFig}[3]{
        \begin{figure}[htbp] 
 		\centering
 		\includegraphics[width=#3]{#1}
		  \caption{#2}
                \label{fig:#1}
        \end{figure} }
\newcommand{\pa}[1]{\text{pa}(#1)}
\newcommand{\ga}[1]{\text{pa}(\text{pa}(#1))}
\newcommand{\si}[1]{\mathcal{S}(#1)}
\newcommand{\revv}[1]{#1}
\begin{document}
	
\title{Unbalanced Multi-Phase Distribution Grid Topology Estimation and Bus Phase Identification}

\author{Yizheng~Liao,~\IEEEmembership{Student Member,~IEEE,}
        Yang~Weng,~\IEEEmembership{Member,~IEEE,}
        Guangyi~Liu,~\IEEEmembership{Senior Member,~IEEE,}
        Zhongyang~Zhao,
        Chin-Woo~Tan,
        Ram~Rajagopal,~\IEEEmembership{Member,~IEEE}
        \vspace{-4ex}
\thanks{Y. Liao, C-W.Tan, R. Rajagopal are with Department of Civil and Environmental
Engineering, Stanford University, Stanford, CA, 94305 USA e-mail: (\{yzliao, tancw,
ramr\}@stanford.edu). Y. Weng is with School of Electrical, Computing, and
Energy Engineering, Arizona State University, Tempe, AZ, 85287 USA e-mail:
yang.weng@asu.edu. G. Liu and Z. Zhao are with GEIRI
North America, San Jose, CA, 95134, USA e-mail: (guangyi.liu@geirina.net, ecezhao@gmail.com)}}
\maketitle

\begin{abstract}
There is an increasing need for monitoring and controlling uncertainties brought by distributed energy resources in distribution grids. For such goal, accurate multi-phase topology is the basis for correlating measurements in unbalanced distribution networks. Unfortunately, such topology knowledge is often unavailable due to limited investment, especially for \revv{low-voltage} distribution grids. Also, the bus phase labeling information is inaccurate due to human errors or outdated records. For this challenge, this paper utilizes smart meter data for an information-theoretic approach to learn the topology of distribution grids. Specifically, multi-phase unbalanced systems are converted into symmetrical components, namely positive, negative, and zero sequences. Then, this paper proves that the Chow-Liu algorithm finds the topology by utilizing power flow equations and the conditional independence relationships implied by the radial multi-phase structure of distribution grids with the presence of incorrect bus phase labels. At last, by utilizing Carson's equation, this paper proves that the bus phase connection can be correctly identified using voltage measurements. For validation, IEEE systems are simulated using three real data sets. The simulation results demonstrate that the algorithm is highly accurate for finding multi-phase topology even with strong load unbalancing condition and DERs. This ensures close monitoring and controlling DERs in distribution grids.
\end{abstract}

\maketitle

%
%

\section{Introduction}
The power distribution system is currently undergoing a dramatic transformation in \revv{both forms and functions. Large-scale deployments of technologies such as rooftop solar, electric vehicles (EVs), and smart home management systems have the potential to offer cheaper, cleaner and more controllable energy to the customers.} On the other hand, \revv{the integration of these resources} has been proven to be nontrivial, largely because of their inherent uncertainty and distributed nature.

For example, even a small-scale of distributed energy resources (DERs) can affect the stability of distribution grids \cite{dey2010urban}. Such a problem will be aggravated by the unbalance situation in distribution grids especially when uneven DER deployment happens. Furthermore, the more frequent bi-directional power flows easily leave the existing monitoring system with passive protective devices insufficient for robust grid operations. In addition to the static connectivity, mobile components, such as EVs, can further jeopardize the grid stability due to their frequent plug-in \cite{clement2010impact}. Therefore, the multi-phase grid monitoring tools need to be carefully designed for islanding and line work hazards in system operation with deep and uneven DER penetrations. For such monitoring, grid topology information is a prerequisite.

For topology estimation, the transmission grid assumes \revv{a prior knowledge of grids}, which needs limited error correction. Also, it is assumed that infrequent reconfiguration happens, identifiable by generalized state estimation \cite{huang2012electric, abur2004power, Lugtu80}. Unfortunately, such assumptions do not hold in medium- and low-voltage distribution grids, where topology can change relatively more frequently with limited sensing devices.  Furthermore, many urban distribution lines have been underground for decades, making prior knowledge of topology suspicious and expensive to verify \cite{rudin2012machine}. 

For distribution grid topology identification, many methods have been proposed in recent years. For example, in \cite{deka2015structure}, the correct topology is searched from a set of possible radial networks. Given the line parameters, \cite{cavraro2017voltage} and \cite{sharon2012topology} propose maximum likelihood methods to select the operational distribution grid topology. \cite{bolognani2013identification}, \cite{peppanen2016distribution}, and \cite{liao2018urban} utilize the statistical correlation of single-phase voltages collected from smart meters to estimate distribution grid topology. Unfortunately, all of these methods focus on the balanced or single-phase systems. For utility practice, distribution grids for buildings and residential areas are usually \revv{unbalanced and multi-phase systems}. One reason is that the loads connected at different phases are unbalanced due to the uneven growth in each feeder territory \cite{lo1993decomposed,tleis2007power}. For example, surveyed by the American National Standards Institute (ANSI), \revv{$2$\% of distribution grids in the USA} have a significant undesirable degree of unbalance \cite{ansc95,routtenberg2015pmu}. With the growth of renewable penetration, the load unbalance problem will become more frequent in future distribution grids. For example, the unbalance of the multi-phase system appears more often because the installations and operations of many DER devices are not fully controlled by utilities. This fact makes the requirement of balanced grids in previous works invalid in field applications.

In order to find the topology of unbalanced multi-phase distribution grids, \cite{yuan2016inverse} and its follow-up work \cite{ardakanian2017event} formulate multi-phase measurements as vectors and apply the single-phase approach to estimate grid topology. In \cite{deka2018topology}, the multi-phase power flow equations are linearized and the topology estimation is formulated as a statistical learning problem. For all these approaches, a prerequisite is \revv{installations of Phase Measurement Units (PMUs), which have not been widely available in distribution grids.} In addition, these methods assume bus phase labelings are correct at each bus. For many utilities, as high as $10\%$ phase labelings are incorrect or unknown because of human errors or outdated records. This high error rate makes identifying new topology based on existing methods not sound anymore.

For resolving the problems above, this paper proposes a data-driven method that utilizes the smart meter data in different phases to estimate the topology of multi-phase distribution grid systems. Building on our previous works on \revv{the probabilistic graphical model formulation} of distribution grids \cite{weng2017distributed}, firstly, this paper expands the method from \revv{the single-phase representation to multi-phase balanced systems} with incorrect bus phase labels. In such model, a node represents the multi-phase bus voltages and an edge between nodes indicates the statistical dependency among multi-phase bus voltage measurements. 

Subsequently, the system of three unbalanced phasors is converted to three symmetrical components, namely the positive, negative, and zero sequences. Then, the Chow-Liu algorithm is proved to be optimal for identifying the multi-phase topology by utilizing power flow equations and the conditional independence relationships implied by the radial multi-phase structure of distribution grids. As a highlight, the proposed method does not require PMUs and is robust to incorrect phase labels, which is a critical problem in distribution grid operations. This feature is due to the label-invariant property of mutual information. Another major contribution is bus phase correction and identification. Specifically, a data-driven approach is proposed to identify true bus phase connections by utilizing Carson's equation \cite{kersting2006distribution}, which is employed for deriving the primitive phase impedances of different lines.


The performance of the proposed method is verified by simulations on the \revv{IEEE $37$-bus, $123$-bus, and $8500$-bus distribution test cases \cite{kersting2001radial}.} Three different data sets are used for simulation: North California PG\&E residential household data sets, ADRES project data set \cite{Einfalt11,VUT16} that contains $30$ houses load profiles in Upper-Austria, and Pecan Street data set, which contains load data of $345$ houses with PV panels in Austin, Taxes. Simulations are conducted via GridLAB-D, an open source distribution grid simulator \cite{chassin2008gridlab} for multi-phase systems. \revv{Simulation results show that, provided with hourly measurements, the proposed algorithm perfectly estimates the topology of multi-phase distribution grids with noiseless measurements.}

The rest of the paper is organized as follows: Section~\ref{sec:model} introduces the modeling of the multi-phase distribution system and the problem of data-driven topology  estimation. Section~\ref{sec:alg} firstly proves the topology estimation problem of a multi-phase distribution grid can be solved as a mutual information maximization problem and proposes an algorithm to solve such a maximization problem in multi-phase setup. Also, a method is proposed to identify the bus phase connection. In Section~\ref{sec:alg_unbalance}, to address the unbalance in distribution grids, an unbalanced distribution grid is transformed to a symmetric system using sequence component frame and prove that the mutual information approach can still apply to grid topology estimation and phase identification. Section~\ref{sec:sim} evaluates the performance of our method using IEEE test cases and real data collected from different regions. Section~\ref{sec:con} concludes this paper. 

\section{Multi-phase Distribution Grid Modeling and Problem Formulation}
\label{sec:model}
A distribution grid is modeled by a graph $\mathcal{G} = (\mathcal{M},\mathcal{E})$, where the vertex set $\mathcal{M} = \{0,1,2,\cdots,M\}$ represents the set of buses and the unidirectional edge set $\mathcal{E} = \{(i,k), i,k \in \mathcal{M}\}$ represents the branches. The branch between two buses is not necessary to be multi-phase. In the distribution grid, bus $0$ is the substation with a fixed voltage and is the root of the tree graph. $\mathcal{M}^+$ denotes the set of buses excluding the substation, i.e., $\mathcal{M}^+ = \mathcal{M}\backslash \{0\}$. If bus $i$ and bus $k$ are connected, i.e., $(i,k) \in \mathcal{E}$, and bus $i$ is closer to the root (substation) than bus $k$, bus $i$ is the \textit{parent} of bus $k$ and bus $k$ is the \textit{child} of $i$. Let $\text{pa}(i)$ denote the parent bus of bus $i$. The root has no parent and all other buses in $\mathcal{M}^+$ have exactly one parent. Let $\mathcal{C}(i)$ denote the set of child buses of bus $i$ and use $\mathcal{S}(i) = \mathcal{C}(\text{pa}(i)) \backslash \{i\}$ to denote the set of sibling buses of bus $i$.

Let $a$, $b$, and $c$ denote the three phases of the distribution grid. The vector $\mathbf{V}^{abc}_i = [V^a_i, V^b_i, V^c_i]^T \in \complex^3$ denotes the nodal voltages at bus $i$, where $V^\phi_i$ denotes the line-to-ground complex voltage on phase $\phi$ and $T$ is the transpose operator. Similarly, $\mathbf{I}^{abc}_i = [I^a_i, I^b_i, I^c_i]^T \in \complex^3$ and $\mathbf{S}^{abc}_i = [S^a_i, S^b_i, S^c_i]^T \in \complex^3$ denote the vectors of current injections and injected complex powers at bus $i$, respectively. If a bus is only connected with one or two phases, the quantities of the missing phase are zeros. For example, if bus $i$ does not have phase $c$, $V^c_i = 0$, $I^c_i = 0$, and $S^c_i = 0$. For convenience, $\mathbf{V}_i$, $\mathbf{I}_i$, and $\mathbf{S}_i$ are used as the general notation of multi-phase quantities at bus $i$.

If bus $i$ and bus $k$ are connected, i.e., $(i,k) \in \mathcal{E}$, the relationship between their nodal voltages and currents can be expressed as follows \cite{laughton1968analysis,chen1991distribution}:
\begin{equation}
	\begin{bmatrix}
		\mathbf{I}_i \\ \mathbf{I}_k 
	\end{bmatrix}
	=
	\begin{bmatrix}
		\Yabc_{ik}+\frac{1}{2}\mathbf{B}_{i,\text{shunt}} & -\Yabc_{ik} \\
		-\Yabc_{ik} & \Yabc_{ik}+\frac{1}{2}\mathbf{B}_{k,\text{shunt}}
	\end{bmatrix}
	\begin{bmatrix}
		\mathbf{V}_i \\ \mathbf{V}_k
	\end{bmatrix},
\end{equation}
where $\Yabc_{ik} \in \complex^{3\times 3}$ denotes the admittance submatrix between bus $i$ and bus $k$ and $\mathbf{B}_{i,\text{shunt}} \in \complex^{3\times 3}$ denotes the shunt capacitance at bus $i$. In a multi-phase system, $\Yabc_{ik}$ is not diagonal. The voltages at different phases are coupled. As shown in Section~\ref{sec:sim}, this coupling property in multi-phase systems leads the existing single-phase methods to have poor performance in unbalanced multi-phase systems. Since $\mathbf{B}_{i,\text{shunt}}$ is relatively small in distribution grids \cite{kersting2012distribution}, $\mathbf{B}_{i,\text{shunt}}$ is assumed to be zeros, i.e.,  $\Bzero$. In the formulation above, the effect of the neural wire is merged into the multi-phase wires by applying Kron's reduction \cite{chen1991distribution}. If bus $i$ and bus $k$ are not connected, $\Yabc_{ik} = \mathbf{0}$. 

For bus $i$, the voltage measurement at time $n$ is $\mathbf{v}_i[n] = [v^a_i[n], v^b_i[n], v^c_i[n]]^T$, where $v^\phi_i[n] = |v^\phi_i[n]|\exp{j\theta^\phi_i[n]}$ denotes the complex voltage measurement on phase $\phi$ at time $n$ and $j=\sqrt{-1}$. The magnitude $|v^\phi_i[n]| \in \reals$ is in volt and the phase angle $\theta^\phi_i[n] \in \reals$ is in degree. All measurements are assumed to be noiseless at first. In Section~\ref{sec:sim}, the proposed algorithm will be validated with noisy measurements. In the following part, the upper-case letter denotes the symbol and the lower-case letter denotes the snapshot of symbol measurement. For example, $\mathbf{V}$ denotes the voltage symbol and $\mathbf{v}[n]$ denotes the voltage measurement at time $n$.

\revv{
With the modeling above, the multi-phase distribution grid topology estimation and bus phase identification problem is defined as
\begin{itemize}
	\item Problem: data-driven multi-phase distribution grid topology and bus phase estimation using voltage measurements
	\item Given: the time-series voltage measurements with unknown bus phase labels $\mathbf{v}_i[n]$, $n = 1,\cdots,N, i \in \mathcal{M}^+$
	\item Find: the unknown grid topology $\mathcal{E}$ and bus phase $\phi$.
\end{itemize}
}

\section{Multi-Phase Distribution Grid Topology Estimation and Bus Phase Identification}
\label{sec:alg}

This section firstly extends our previous work \cite{weng2017distributed} to estimate the topology of multi-phase system with incorrect phase labels. \revv{Then, a novel method is proposed to identify the true bus phase labels by utilizing the statistical relationship between voltage measurements.} The method proposed in this section focuses on balanced distribution systems. When a distribution system is unbalanced, a modified algorithm is proposed in Section~\ref{sec:alg_unbalance}. Fig.~\ref{fig:flowchart3} summarizes the criteria for the topology estimation method selection. 

\putFig{flowchart3}{Flow chart of topology estimation method selection.}{1\linewidth}

The end-user measurements are time-series data. One way to represent these data is using a probability distribution. If the nodal multi-phase voltage vector $\mathbf{V}_i$ is \revv{modeled as a random vector}, the joint distribution of voltage measurement $P(\mathbf{V}_{\mathcal{M}^+})$ is $P(\mathbf{V}_1) P(\mathbf{V}_2|\mathbf{V}_1)\cdots P(\mathbf{V}_M|\mathbf{V}_1,\cdots,\mathbf{V}_{M-1})$. Bus $0$ is omitted because it is the slack bus with a fixed voltage. 

\revv{Many previous works of distribution grid topology estimation \cite{deka2015structure,bolognani2013identification,liao2016urbanpes, weng2017distributed} only require the single-phase voltages. However, with the presence of false or unknown phase labels, all three phases voltage measurements are needed for topology estimation. The latter part of this section will show that our method is invariant to phase label accuracy and therefore can estimate topology with false or unknown phase labels.}

In many medium- and low-voltage distribution grids, the probability distribution of voltage is irregular. To better formulate the topology estimation problem, the incremental change of measurements is adopted in this paper \cite{chen2016quickest, liao2018urban, deka2016estimating}. At bus $i$, the incremental change of voltage is $\vabc_i[n] = \mathbf{v}^{abc}_i[n] - \mathbf{v}^{abc}_i[n-1]$ for $n \geq 2$. When $n=1$, $\vabc_i[1] = 0$. By using the incremental change $\Vabc$, the joint probability is
\begin{eqnarray}
	  P(\Vabc_{\mathcal{M}^+}) &=& P(\Vabc_1) P(\Vabc_2|\Vabc_1)\cdots \nonumber \\
	  && \times P(\Vabc_M|\Vabc_1,\cdots,\Vabc_{M-1}).
\end{eqnarray}

Since the nodal voltages are modeled as random vectors, the graph $\mathcal{G}$ becomes a probabilistic graphical model with a tree structure. In a graphical model, the vertex represents a random vector (e.g., $\Vabc_i$) and the edge between two vertices indicates the statistical dependency between bus voltages. Therefore, estimating distribution grid topology is equivalent to recovering the radial structure of the graphical model $\mathcal{G}$.

In a single-phase distribution grid, the nodal voltages only have statistical dependency with the nodal voltages of their parent bus \cite{weng2017distributed}. In the next part, such dependency will be extended from single-phase systems to approximate multi-phase systems' joint probability $P(\Vabc_{\mathcal{M}^+})$ as
\begin{equation}
\label{eq:approx}
	P(\Vabc_{\mathcal{M}^+}) \simeq \prod_{i=1}^M P(\Vabc_i|\Vabc_{\text{pa}(i)}).
\end{equation}
If (\ref{eq:approx}) holds, finding the structure of $\mathcal{G}$ is equivalent to finding the parent of each bus. \revv{The next part uses a two-stage approach to prove the approximation in (\ref{eq:approx}) holds with equality. In the first stage,  bus voltages are proved to be conditionally independent, given their parents, grandparents, and siblings, i.e., $P(\Vabc_{\mathcal{M}^+}) = \prod_{i=1}^M P(\Vabc_i|\Vabc_{\{\text{pa}(i), \text{pa}(\text{pa}(i)), \mathcal{S}(i)\}})$. Then, inspiring by the real data observation, (\ref{eq:approx}) is shown to holds with equality.}

Before starting the first stage proof, two assumptions are proposed and justified using real data.
\begin{assumption}
\label{assump:indept}
	In a multi-phase distribution gird,
	\begin{itemize}
		\item the incremental change of the current injection $\Iabc$ at each non-slack bus is independent, i.e., $\Iabc_i \perp \Iabc_k$ for all $i\neq k$.
		\item the incremental changes of the current injection $\Iabc$ and bus voltage $\Vabc$ at each bus follow Gaussian distribution with zero means and non-zero covariances. 
	\end{itemize}
\end{assumption}

Fig.~\ref{fig:IP_indept} shows the pairwise mutual information of the incremental changes of bus current injection using the real data from PG\&E. The mutual information $I(\mathbf{X}, \mathbf{Y})$ is a measure of the statistical dependence between two random vectors $\mathbf{X}$ and $\mathbf{Y}$. When the mutual information is zero, these two random vectors are independent, i.e., $\mathbf{X} \perp \mathbf{Y}$ \cite{cover2012elements}. In Fig.~\ref{fig:IP_indept}, most pairs of $\Iabc$ have small values. Thus, the current injections are assumed to be independent with some approximation errors. This assumption has also been adopted in other works, e.g., \cite{deka2015structure, bolognani2013identification, deka2018topology}. To further validate the independence of $\Iabc$, Fig.~\ref{fig:current_autocorr} plots the average auto-correlation of current injection increment of PG\&E data in the IEEE 123-bus system. The error bar is one standard deviation. In Fig.~\ref{fig:current_autocorr}, the auto-correlation of $\Iabc$ drops significantly as the lag increases. This observation justifies that the current injection increments are approximately independent over time.

\putFig{IP_indept}{Mutual information of pairwise current injection increment $\Iabc$ and power injection increment $\Sabc$ of PG\&E data sets in the IEEE 123-bus system.}{\linewidth}

\putFig{current_autocorr}{Average auto-correlation of current injection increment $\Iabc$ of PG\&E data sets in the IEEE 123-bus system. The error bar is one standard deviation.}{\linewidth}

\revv{Both injected power increment independence and injected current increment independence are adopted in the existing works of distribution grid topology estimation. \cite{liao2018urban} uses the real data to show that these two assumptions are equivalent in distribution grids. Fig.~\ref{fig:IP_indept} illustrates the mutual information of pairwise power injection increment $\Sabc$ and pairwise current injection increment $\Iabc$.} Both histograms are similar. In this paper, the assumption of current injection independence is preferred because it simplifies the proof of following theorems and lemmas.

\putFig{v_dist}{Histograms of $|\Vabc|$ of four buses in IEEE 123-bus system using PG\&E data.}{\linewidth}

Fig.~\ref{fig:v_dist} illustrates the histograms of bus voltage $|\Vabc|$ in IEEE 123-bus system using PG\&E data. Hence, the voltage data approximately follow Gaussian distributions. With Assumption~\ref{assump:indept}, $P(\Vabc_{\mathcal{M}^+})$ is proved to be $\prod_{i=1}^M P(\Vabc_i|\Vabc_{\{\text{pa}(i), \text{pa}(\text{pa}(i)), \mathcal{S}(i)\}})$. For connivance, let $\mathcal{C}^\infty(i)$ denote all buses that are below bus $i$. For example, in Fig.~\ref{fig:6bus_example}, $\mathcal{C}^\infty(1) = \{2,3,4,5,6,7\}$ and $\mathcal{C}^\infty(2) = \{4,5\}$.

 
\putFig{6bus_example}{An example of an $8$-bus multi-phase system. A node represents a bus, which can be single-phase or multi-phase. An edge represents a branch between two buses. The branch is unnecessary to be multi-phase. Bus $0$ is the substation (root).}{0.5\linewidth}
\begin{lemma}
\label{thm:cond_indept}
	If the incremental change of current injection at each bus is approximately independent (i.e., $\Iabc_i \perp \Iabc_k$ for $i\neq k$), given the incremental voltage changes of bus $i$'s parent ($\Vabc_{\text{pa}(i)}$), grandparent ($\Vabc_{\text{pa}(\text{pa}(i))}$), and siblings ($\Vabc_{\mathcal{S}(i)}$), the incremental voltage changes of bus $i$ and the buses that are not below bus $i$ are conditionally independent, i.e., $\Vabc_i \perp \Vabc_k | \Vabc_{\{\text{pa}(i), \text{pa}(\text{pa}(i)), \mathcal{S}(i)\}}$ for $k \notin \{\text{pa}(i), \text{pa}(\text{pa}(i)),\mathcal{S}(i), \mathcal{C}^\infty(i)\}$ and $i\neq k$.
\end{lemma}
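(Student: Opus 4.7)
The plan is to combine the multi-phase power-flow equations with the statistical independence of current injections (Assumption~\ref{assump:indept}) to verify the conditional independence directly. First, I would write the branch Ohm's law $\Vabc_p - \Vabc_c = \mathbf{Z}_{pc}\,\mathbf{J}_{pc}$ for each edge $(p,c)\in\mathcal{E}$, where $\mathbf{J}_{pc}$ denotes the incremental branch current from $p$ to $c$ and $\mathbf{Z}_{pc}=\mathbf{Y}_{pc}^{-1}$. By KCL, $\mathbf{J}_{pc}=\Iabc_c+\sum_{j\in\mathcal{C}^\infty(c)}\Iabc_j$, the sum of all incremental bus injections in the subtree rooted at $c$. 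Cascading these relations from the slack bus expresses every $\Vabc_j$ as a fixed affine function of the independent injections $\{\Iabc_m\}_{m\in\mathcal{M}^+}$. Because Assumption~\ref{assump:indept} makes the injections independent zero-mean Gaussian vectors, $\Vabc_{\mathcal{M}^+}$ is jointly Gaussian, so it will suffice to show that, after fixing the conditioning voltages, $\Vabc_i$ and $\Vabc_k$ are affine functions of \emph{disjoint} sub-collections of the injections; zero conditional cross-covariance plus joint Gaussianity then yields full conditional independence.

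Let $\mathcal{T}_i:=\{i\}\cup\mathcal{C}^\infty(i)$ denote the subtree rooted at $i$. Then $\Vabc_i=\Vabc_{\text{pa}(i)}-\mathbf{Z}_{\text{pa}(i),i}\sum_{j\in\mathcal{T}_i}\Iabc_j$, so conditional on $\Vabc_{\text{pa}(i)}$ the randomness in $\Vabc_i$ lives entirely inside $\{\Iabc_j:j\in\mathcal{T}_i\}$. For any $k\notin\mathcal{T}_i\cup\{\text{pa}(i),\text{pa}(\text{pa}(i)),\mathcal{S}(i)\}$, I would walk the unique tree path from $k$ toward the root until it first meets a bus in the conditioning set---one of $\text{pa}(i)$, $\text{pa}(\text{pa}(i))$, or a sibling $s\in\mathcal{S}(i)$---and then cascade Ohm's law back out to $k$. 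A short case split on the position of $k$ ($k$ is a descendant of some $s\in\mathcal{S}(i)$; $k$ is in an uncle-subtree hanging off $\text{pa}(\text{pa}(i))$; or $k$ lies strictly above $\text{pa}(\text{pa}(i))$) should show that every branch current entering this expression is either pinned by the conditioning voltages or is the sum of injections in a subtree disjoint from $\mathcal{T}_i$.

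The hardest case will be when $k$ lies strictly above the grandparent, since then $\Vabc_k$ naively depends on the branch current $\mathbf{J}_{\text{pa}(\text{pa}(i)),\text{pa}(i)}$, which by KCL aggregates \emph{all} injections below $\text{pa}(i)$ and therefore carries $\mathcal{T}_i$-randomness up the tree. This is precisely why the grandparent's voltage must be conditioned on: together with $\Vabc_{\text{pa}(i)}$ it pins $\mathbf{J}_{\text{pa}(\text{pa}(i)),\text{pa}(i)}$ via the branch equation, turning it into a conditioning-measurable quantity. Conditioning additionally on each $\Vabc_s$, $s\in\mathcal{S}(i)$, likewise pins the sibling-branch current $\mathbf{J}_{\text{pa}(i),s}$, so that the residual upstream randomness feeding $\Vabc_k$ comes only from injections outside $\mathcal{T}_i$ (namely injections at or beneath ancestors of $\text{pa}(\text{pa}(i))$ and at sibling-of-ancestor subtrees). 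Assembling the three cases, $\Vabc_i$ and $\Vabc_k$ (after subtracting their conditioning-affine parts) become affine functions of disjoint subsets of the independent Gaussian injections, and hence $\Vabc_i\perp\Vabc_k\mid\Vabc_{\{\text{pa}(i),\text{pa}(\text{pa}(i)),\mathcal{S}(i)\}}$, completing the proof of the lemma.
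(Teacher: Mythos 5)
Your proposal is correct and follows essentially the same route as the paper's proof: both hinge on the identity that the aggregate of current injections over a subtree equals the branch current feeding that subtree, which is pinned once the voltages at both ends of the relevant branches (parent, grandparent, siblings) are conditioned on, so that $\Vabc_i$ and $\Vabc_k$ end up driven by disjoint groups of independent injections. The paper obtains this identity by telescoping sums of rows of the nodal admittance equation over representative cases (its Fig.~for the proof of Lemma~\ref{thm:cond_indept}), whereas you state it directly as KCL plus branch Ohm's law and close the argument with the explicit Gaussian zero-conditional-covariance step that the paper leaves implicit.
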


Here, a simple example demonstrates Lemma~\ref{thm:cond_indept}. A formal proof is given in Appendix section ~\ref{sec:proof_cond_indept}. For the example system in Fig.~\ref{fig:6bus_example}, the nodal admittance equation is $\Yabc_{\mathcal{M}^+}\Vabc_{\mathcal{M}^+}=\Iabc_{\mathcal{M}^+}$, where
\begin{equation}
	\label{eq:nodal_eqn}
\Yabc_{\mathcal{M}^+} = 
\begin{bmatrix}
	\Yabc_{11} & \Yabc_{12} & \Yabc_{13} & \Bzero & \Bzero & \Bzero & \Bzero\\
	\Yabc_{21} & \Yabc_{22} & \Bzero & \Yabc_{24} & \Yabc_{25} & \Bzero & \Bzero\\
	\Yabc_{31} & \Bzero & \Yabc_{33} & \Bzero & \Bzero & \Yabc_{36} & \Yabc_{37}\\
	\Bzero & \Yabc_{42} &  \Bzero & \Yabc_{44} & \Bzero & \Bzero & \Bzero\\
	\Bzero & \Yabc_{52} &  \Bzero & \Bzero & \Yabc_{55} & \Bzero & \Bzero\\
	\Bzero & \Bzero &  \Yabc_{63} & \Bzero & \Bzero & \Yabc_{66} & \Bzero \\
	\Bzero & \Bzero &  \Yabc_{73} & \Bzero & \Bzero & \Bzero & \Yabc_{77}
\end{bmatrix},
\end{equation}
$\Yabc_{ik} = \Yabc_{ki}$, and $\Yabc_{ii} = - \sum_{k=0,k\neq i}^7 \Yabc_{ik}$. If $\Yabc_{ik} = \Bzero$, there is no branch between bus $i$ and $k$. 

For bus $4$, $\pa{4} = 2$, $\ga{4} = 1$, and $\si{4} = \{5\}$. Therefore, given $\Vabc_1 = \vabc_1$, $\Vabc_2 = \vabc_2$, and $\Vabc_5 = \vabc_5$, there are the following equations:
\begin{eqnarray}
	\Iabc_1 &=& \Yabc_{11}\vabc_1 + \Yabc_{12}\vabc_2 + \Yabc_{13}\Vabc_3, \label{eq:i1} \\
	\Iabc_4 &=& \Yabc_{42}\vabc_2 + \Yabc_{44}\Vabc_4,\label{eq:i4} \\
	\Iabc_6 &=& \Yabc_{63}\Vabc_3 + \Yabc_{66}\Vabc_6, \label{eq:i6}\\
	\Iabc_7 &=& \Yabc_{73}\Vabc_3 + \Yabc_{77}\Vabc_7. \label{eq:i7}
\end{eqnarray}
Given $\Iabc_1 \perp \Iabc_4$, according to (\ref{eq:i1}) and (\ref{eq:i4}), $\Vabc_3$ and $\Vabc_4$ are conditionally independent given $\Vabc_1$, $\Vabc_2$, and $\Vabc_5$. In (\ref{eq:i1}), $\Vabc_3$ can be rewritten as $(\Yabc_{13})^{-1}(\Iabc_1 - \Yabc_{11}\vabc_1 - \Yabc_{12}\vabc_2)$. Then, $\Vabc_3$ is substituted into (\ref{eq:i6}). Since $\Iabc_1$, $\Iabc_4$, and $\Iabc_6$ are independent, $\Iabc_4$ and $\Iabc_6 - \Yabc_{63}(\Yabc_{13})^{-1}\Iabc_1$ are independent. Therefore, $\Vabc_4$ and $\Vabc_6$ are conditionally independent. Similarly, $\Vabc_4$ and $\Vabc_7$ are conditionally independent.

For a non-leaf bus, bus $2$, given $\Vabc_3 = \vabc_3$ and $\Vabc_1 = \vabc_1$, there are the following equations:
\begin{eqnarray}
    \Iabc_1 &=& \Yabc_{11}\vabc_1 + \Yabc_{12}\Vabc_2 + \Yabc_{13}\vabc_3, \label{eq:i11} \\
    \Iabc_6 &=& \Yabc_{63}\vabc_3 + \Yabc_{66}\Vabc_6, \label{eq:i66} \\
    \Iabc_7 &=& \Yabc_{73}\vabc_3 + \Yabc_{77}\Vabc_7. \label{eq:i77}
\end{eqnarray}
Given $\Iabc_1$ and $\Iabc_6$ are independent, according to (\ref{eq:i11}) and (\ref{eq:i66}), $\Vabc_2$ and $\Vabc_6$ are conditionally independent. Similarly, $\Vabc_2$ and $\Vabc_6$ are conditionally independent given $\Iabc_1$ and $\Iabc_7$ are independent. Our conclusion in Lemma~\ref{thm:cond_indept} is similar to the results in \cite{deka2017topology}.



\begin{assumption}
\label{ass:ass2}
	In a distribution grid, the mutual information between $\Vabc_i$ and its parent $\Vabc_{\pa{i}}$ is much larger than the mutual information between $\Vabc_i$ and $\Vabc_{\ga{i}}$ and $\Vabc_{\si{i}}$.
\end{assumption}

\putFig{bus_corr}{Mutual information of pairwise buses in IEEE 123-bus system using PG\&E data sets. The circle indicates the neighbors of bus $i$. The crossing indicates the two-step neighbor of bus $i$. The square without markers represents the bus pair that are more than two-step away.}{\linewidth}

Assumption~\ref{ass:ass2} is inspired by \revv{the real data observations}. Fig.~\ref{fig:bus_corr} plots the mutual information of voltage increments between each bus pair in \revv{IEEE 123-bus distribution system using the PG\&E data}. The distribution grid configuration and simulation setup are described in Section~\ref{sec:sim}. In Fig.~\ref{fig:bus_corr}, the color in a square represents the \revv{mutual information of voltage increments between two buses}. \revv{If the voltage increments of two buses are independent, their mutual information is zero\cite{cover2012elements} (dark color)}. In Fig.~\ref{fig:bus_corr}, the circle refers to the bus neighbors (e.g. parent bus) and the crossing indicates the two-step neighbors (grandparent bus and sibling buses). If a square does not have any marker, the corresponding pair of buses is more than two-step away. In Fig.~\ref{fig:bus_corr}, the mutual information between the voltages of two-step neighbors is higher than the mutual information of other bus pairs, but it is still much lower than the mutual information between two neighbors. The diagonal bus pairs have the highest mutual information because it is the self-information. With Assumption~\ref{ass:ass2}, $P(\Vabc_{\mathcal{M}_+})$ can be simplified to only depend on the voltages of parent buses. Section~\ref{sec:sim} uses numerical simulations to demonstrate that this approximation does not degrade the performance of topology and bus phase connectivity estimation. 

\begin{lemma}
\label{lemma:one_hop_indept}
Given the incremental voltage changes of bus $i$ in a multi-phase distribution grid, if the incremental change of current injection at each bus is approximately independent, the incremental voltage changes of every pair of bus $i$'s children are conditionally independent, i.e., $\Vabc_k \perp \Vabc_l | \Vabc_i $ for $k, l \in \mathcal{C}(i)$ and $k\neq l$.
\end{lemma}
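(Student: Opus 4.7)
The plan is to adapt the substitution-and-elimination argument illustrated in the example proof of Lemma~\ref{thm:cond_indept} to the case of two siblings $k,l\in\mathcal{C}(i)$ sharing the common parent $i$. The physical idea is that conditioning on $\Vabc_i$ electrically isolates the subtree rooted at $k$ from the subtree rooted at $l$, because in a radial network the only conductive path between the two subtrees passes through bus $i$. Once that path is ``held fixed,'' the current-injection independence of Assumption~\ref{assump:indept} should translate directly into the desired voltage conditional independence.

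First I would partition the descendants into disjoint subtrees. Let $\mathcal{T}_k = \{k\}\cup\mathcal{C}^\infty(k)$ and $\mathcal{T}_l = \{l\}\cup\mathcal{C}^\infty(l)$, so that $\mathcal{T}_k\cap\mathcal{T}_l=\emptyset$ by radiality. For any bus $j\in\mathcal{T}_k$ with $j\neq k$, both $\pa{j}$ and every child of $j$ lie inside $\mathcal{T}_k$; only bus $k$ has a neighbor ($i$) outside the set. Collecting the rows of $\Yabc_{\mathcal{M}^+}\Vabc_{\mathcal{M}^+}=\Iabc_{\mathcal{M}^+}$ indexed by $\mathcal{T}_k$ and substituting the conditioned value $\Vabc_i=\vabc_i$ yields the closed linear system
\[
\Yabc_{\mathcal{T}_k\mathcal{T}_k}\,\Vabc_{\mathcal{T}_k} \;=\; \Iabc_{\mathcal{T}_k} \;-\; \Yabc_{\mathcal{T}_k,\{i\}}\,\vabc_i,
\]
where the boundary block $\Yabc_{\mathcal{T}_k,\{i\}}$ is nonzero only in the row corresponding to bus $k$. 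The analogous system holds on $\mathcal{T}_l$.

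Next I would argue that $\Yabc_{\mathcal{T}_k\mathcal{T}_k}$ is invertible so that $\Vabc_k = g_k(\Iabc_{\mathcal{T}_k},\vabc_i)$ for some deterministic linear map $g_k$, and similarly $\Vabc_l = g_l(\Iabc_{\mathcal{T}_l},\vabc_i)$. Because $\mathcal{T}_k\cap\mathcal{T}_l=\emptyset$, the families $\{\Iabc_j : j\in\mathcal{T}_k\}$ and $\{\Iabc_j : j\in\mathcal{T}_l\}$ are disjoint and, by Assumption~\ref{assump:indept}, mutually independent. Conditioning on $\Vabc_i=\vabc_i$ fixes the only argument that $g_k$ and $g_l$ share, so $\Vabc_k$ and $\Vabc_l$ become measurable functions of disjoint independent current-injection vectors, which yields $\Vabc_k\perp\Vabc_l\mid\Vabc_i$.

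The main obstacle is the invertibility of $\Yabc_{\mathcal{T}_k\mathcal{T}_k}$. Unlike the scalar single-phase case, each block is a $3\times 3$ complex matrix, so scalar diagonal-dominance cannot be invoked verbatim. The plan is to view $\Yabc_{\mathcal{T}_k\mathcal{T}_k}$ as the bus-admittance matrix of the connected subtree $\mathcal{T}_k$ augmented with a grounding term $\Yabc_{k,i}$ at bus $k$, so that it inherits the same nonsingularity already implicit in (\ref{eq:nodal_eqn}); alternatively, one may repeat the leaves-to-root elimination used beneath equations (\ref{eq:i1})--(\ref{eq:i7}) of the Lemma~\ref{thm:cond_indept} walk-through, which avoids explicit matrix inversion but requires some bookkeeping across the multi-phase blocks.
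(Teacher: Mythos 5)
There is a genuine gap, and it sits at the decisive step: ``Conditioning on $\Vabc_i=\vabc_i$ fixes the only argument that $g_k$ and $g_l$ share, so $\Vabc_k$ and $\Vabc_l$ become measurable functions of disjoint independent current-injection vectors.'' The subtree decomposition and the representations $\Vabc_k=g_k(\Iabc_{\mathcal{T}_k},\Vabc_i)$, $\Vabc_l=g_l(\Iabc_{\mathcal{T}_l},\Vabc_i)$ are fine, but the conclusion requires that $\Iabc_{\mathcal{T}_k}$ and $\Iabc_{\mathcal{T}_l}$ remain independent \emph{after} conditioning on $\Vabc_i$. They do not: $\Vabc_i$ is not an exogenous quantity but is itself determined by the global solve $\Vabc_{\mathcal{M}^+}=\Yabc_{\mathcal{M}^+}^{-1}\Iabc_{\mathcal{M}^+}$, hence it is a linear combination of the injections in \emph{both} subtrees. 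Conditioning on a common linear function of independent Gaussian vectors couples them (the ``explaining-away'' effect at a collider), so the conditional joint law of $(\Iabc_{\mathcal{T}_k},\Iabc_{\mathcal{T}_l})$ given $\Vabc_i=\vabc_i$ does not factorize, and independence of $g_k$ and $g_l$ does not follow. You have implicitly replaced ``observing $\Vabc_i$'' with ``externally clamping bus $i$'s voltage,'' which would indeed decouple the subtrees electrically but is a different (interventional) operation.

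In fact the exact statement you are trying to prove is false under Assumption~\ref{assump:indept} alone. With independent Gaussian injections, the precision matrix of $\Vabc_{\mathcal{M}^+}$ is $\Yabc_{\mathcal{M}^+}^{H}\Cov(\Iabc_{\mathcal{M}^+})^{-1}\Yabc_{\mathcal{M}^+}$, whose $(k,l)$ block for siblings $k,l\in\mathcal{C}(i)$ equals $\Yabc_{ik}^{H}\Cov(\Iabc_i)^{-1}\Yabc_{il}\neq\Bzero$; siblings are therefore \emph{not} conditionally independent given the remaining voltages, and a fortiori not given $\Vabc_i$ alone. This is precisely why the paper does not attempt a circuit-theoretic proof of Lemma~\ref{lemma:one_hop_indept}: its route is the two-stage one, in which Lemma~\ref{thm:cond_indept} establishes the \emph{exact} conditional independence given the parent, grandparent, and siblings (a conditioning set that does block all paths), and the reduction to parent-only conditioning is then obtained from the empirical Assumption~\ref{ass:ass2} that the grandparent/sibling mutual information is negligible. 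Your proof would need to invoke Assumption~\ref{ass:ass2} (or an equivalent approximation) at the point where you claim the conditional factorization; as written, the argument proves too much. The separate concern you raise about invertibility of $\Yabc_{\mathcal{T}_k\mathcal{T}_k}$ is a legitimate but minor technicality compared with this.
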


With Lemma~\ref{lemma:one_hop_indept}, (\ref{eq:approx}) holds with equality, i.e., $P(\Vabc_{\mathcal{M}^+}) = \prod_{i=1}^M P(\Vabc_i|\Vabc_{\text{pa}(i)})$. Thus, finding the distribution grid topology is equivalent to finding the parent of each bus. In the following subsections, an information theoretical approach is proposed to estimate the multi-phase distribution grid topology with incorrect bus phase labels.

\subsection{An Information Theoretical Approach to Estimate Multi-phase Distribution Grid Topology}
One way to find the parent of each bus is minimizing the Kullback-Leibler divergence \cite{cover2012elements} of $P(\Vabc_{\mathcal{M}^+})$ and $Q(\Vabc_{\mathcal{M}^+}) = \prod_{i=1}^M P(\Vabc_i|\Vabc_{\text{pa}(i)})$, i.e.,
\begin{equation}
	\label{eq:KL}
	\widehat{\boldsymbol{\Theta}} = \argmin_{\boldsymbol{\Theta} \subset \mathcal{M}^+} D(P(\Vabc_{\mathcal{M}^+})\|Q(\Vabc_{\mathcal{M}^+};\boldsymbol{\Theta})),
\end{equation}
where $\boldsymbol{\Theta}$ denotes the collection of parent bus index of every bus, i.e., $\boldsymbol{\Theta} = \{\text{pa}(1),\cdots, \text{pa}(M)\}$, $P$ denotes the joint distribution of all voltages, and $Q$ denotes the distribution of voltage vectors with tree structure. When two distributions are identical, the KL divergence is zero. Therefore, as shown in Lemma~\ref{lemma:one_hop_indept}, if there exists a distribution $Q(\Vabc_{\mathcal{M}^+};\widehat{\boldsymbol{\Theta}})$ that is identical to $P(\Vabc_{\mathcal{M}^+})$, $\widehat{\boldsymbol{\Theta}}$ contains the parent bus index of every bus $i$. The associated structure of $P_{CL}(\Vabc_{\mathcal{M}^+}) = Q(\Vabc_{\mathcal{M}^+};\widehat{\boldsymbol{\Theta}})$ is the estimated topology of distribution grid. Lemma~\ref{thm:MI_sum} proves that (\ref{eq:KL}) can be efficiently solved by utilizing the radial structure of distribution grids. In the following context, $\boldsymbol{\Theta}_i$ and $\text{pa}(i)$ are used interchangeably. 
\begin{lemma}
\label{thm:MI_sum}
In a radial distribution grid, finding the topology is equivalent to solving the following optimization problem:
	\begin{equation}
	\label{eq:MI_sum}
		\widehat{\boldsymbol{\Theta}} = \argmax_{\boldsymbol{\Theta} \subset \mathcal{M}^+} \sum_{i=1}^M I\left(\Vabc_i; \Vabc_{\boldsymbol{\Theta}_i}\right),
	\end{equation}
	where $I\left(\Vabc_i; \Vabc_{\boldsymbol{\Theta}_i}\right)$ denotes the mutual information.
\end{lemma}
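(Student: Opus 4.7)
The plan is to show that the KL divergence minimization in (\ref{eq:KL}) and the mutual information maximization in (\ref{eq:MI_sum}) are equivalent optimization problems, i.e., they share the same minimizer/maximizer $\widehat{\boldsymbol{\Theta}}$. This is the classical Chow--Liu identity, specialized here to the multi-phase voltage vectors and to the tree factorization guaranteed by Lemma~\ref{lemma:one_hop_indept}.

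First, I would expand the KL divergence by its definition:
\begin{equation*}
D(P \| Q) = \E_P\!\left[\log P(\Vabc_{\mathcal{M}^+})\right] - \E_P\!\left[\log Q(\Vabc_{\mathcal{M}^+};\boldsymbol{\Theta})\right].
\end{equation*}
The first term equals $-H(\Vabc_{\mathcal{M}^+})$, which depends only on the true joint law and is independent of $\boldsymbol{\Theta}$. For the second term I would plug in the factorization $Q(\Vabc_{\mathcal{M}^+};\boldsymbol{\Theta}) = \prod_{i=1}^M P(\Vabc_i \mid \Vabc_{\boldsymbol{\Theta}_i})$, turning the logarithm of the product into a sum and producing $\E_P[\log Q] = -\sum_{i=1}^M H(\Vabc_i \mid \Vabc_{\boldsymbol{\Theta}_i})$.

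Next, I would invoke the identity $H(\Vabc_i \mid \Vabc_{\boldsymbol{\Theta}_i}) = H(\Vabc_i) - I(\Vabc_i;\Vabc_{\boldsymbol{\Theta}_i})$ to rewrite
\begin{equation*}
D(P \| Q) = -H(\Vabc_{\mathcal{M}^+}) + \sum_{i=1}^M H(\Vabc_i) - \sum_{i=1}^M I\!\left(\Vabc_i;\Vabc_{\boldsymbol{\Theta}_i}\right).
\end{equation*}
The first two summands depend only on marginal or joint entropies of the true distribution $P$, hence are constants with respect to the choice of parent mapping $\boldsymbol{\Theta}$. Consequently, minimizing $D(P\|Q)$ over tree-structured $\boldsymbol{\Theta}$ is equivalent to maximizing $\sum_{i=1}^M I(\Vabc_i;\Vabc_{\boldsymbol{\Theta}_i})$, which is exactly (\ref{eq:MI_sum}).

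The main subtlety, and where I would need to be careful, is justifying the attainability of the minimum of the KL divergence by the true tree: this is where Lemma~\ref{lemma:one_hop_indept} is used, since it guarantees that the true $P(\Vabc_{\mathcal{M}^+})$ itself factorizes as $\prod_{i=1}^M P(\Vabc_i\mid \Vabc_{\text{pa}(i)})$. Without this factorization, the argmin in (\ref{eq:KL}) might not correspond to the true parent mapping. I would also note explicitly that the feasible set in both problems is the same (assignments $\boldsymbol{\Theta}$ yielding a valid radial tree rooted at the substation), so the equivalence of objectives translates directly into equality of minimizers, completing the proof.
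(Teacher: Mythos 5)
Your proposal is correct and follows essentially the same route as the paper's proof in Appendix~\ref{app:proof_MI_sum}: expand $D(P\|Q)$, substitute the tree factorization of $Q$ guaranteed by Lemma~\ref{lemma:one_hop_indept}, and separate the $\boldsymbol{\Theta}$-independent entropy terms $-H(\Vabc_{\mathcal{M}^+})+\sum_i H(\Vabc_i)$ from the term $-\sum_i I(\Vabc_i;\Vabc_{\boldsymbol{\Theta}_i})$. The only cosmetic difference is that you route the computation through the identity $H(\Vabc_i\mid\Vabc_{\boldsymbol{\Theta}_i})=H(\Vabc_i)-I(\Vabc_i;\Vabc_{\boldsymbol{\Theta}_i})$ whereas the paper manipulates the integrand directly, and your closing remark on attainability via Lemma~\ref{lemma:one_hop_indept} is a sound addition rather than a departure.
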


The proof is in Appendix~\ref{app:proof_MI_sum}. \revv{With Lemma~\ref{thm:MI_sum}}, a mutual information-based maximum weight spanning tree algorithm, well-known as Chow-Liu algorithm \cite{chow1968approximating}, could find $\widehat{\boldsymbol{\Theta}}$ and identify the multi-phase distribution grid topology. This algorithm has been applied to single-phase system in \cite{weng2017distributed}. Theorem~\ref{thm:chow-liu} proves that Chow-Liu algorithm can be extended to multi-phase systems.
\begin{theorem}
	\label{thm:chow-liu}
	In a radial multi-phase distribution grid, the mutual information-based maximum weight spanning tree algorithm (Chow-Liu algorithm) estimates the best-fitted topology.
\end{theorem}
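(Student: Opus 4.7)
The plan is to reduce Theorem~\ref{thm:chow-liu} to Lemma~\ref{thm:MI_sum} together with a classical maximum-weight-spanning-tree argument. First, I would construct a complete undirected graph on the vertex set $\mathcal{M}^+$ whose edge weight between nodes $i$ and $k$ is the pairwise mutual information $w(i,k) = I(\Vabc_i; \Vabc_k)$. These weights are well-defined, non-negative, and symmetric for vector-valued random variables, and under the Gaussian assumption in Assumption~\ref{assump:indept} they admit a closed form in terms of covariance submatrices of $\Vabc_i$ and $\Vabc_k$, so they are estimable from the measurements $\vabc_i[n]$. Any feasible parent assignment $\boldsymbol{\Theta}$ for the radial grid induces a spanning tree on $\mathcal{M}^+$, and by symmetry of mutual information the objective $\sum_{i=1}^M I(\Vabc_i;\Vabc_{\boldsymbol{\Theta}_i})$ in (\ref{eq:MI_sum}) equals the total edge weight of the induced unrooted tree.

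Second, I would invoke the classical guarantee that Kruskal's greedy procedure (which is exactly the core of the Chow-Liu algorithm when applied with pairwise mutual-information edge weights) returns a spanning tree of maximum total weight. Applied to the above graph, the algorithm produces an unrooted tree $\widehat{T}$ on $\mathcal{M}^+$ that attains the maximum in (\ref{eq:MI_sum}). Rooting $\widehat{T}$ at the substation (bus $0$) yields a parent assignment $\widehat{\boldsymbol{\Theta}} = \{\pa{1}, \dots, \pa{M}\}$, and by Lemma~\ref{thm:MI_sum} the radial structure corresponding to $\widehat{\boldsymbol{\Theta}}$ is the best-fitted multi-phase topology. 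The vector-valued extension does not affect correctness because the only properties of mutual information used by Kruskal's optimality proof are symmetry, non-negativity, and additivity of the tree objective, all of which hold for $I(\Vabc_i;\Vabc_k)$.

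The main obstacle is making precise the passage from a directed parent assignment to an undirected spanning tree: Lemma~\ref{thm:MI_sum} is phrased in terms of a rooted radial structure, whereas Chow-Liu operates on undirected edges. I would resolve this by observing that $I(\Vabc_i;\Vabc_k) = I(\Vabc_k;\Vabc_i)$ so the sum $\sum_i I(\Vabc_i;\Vabc_{\pa{i}})$ depends only on the underlying unrooted edge set, and then appealing to the fact that any tree on $\mathcal{M}^+$ can be rooted at the known substation bus to recover $\widehat{\boldsymbol{\Theta}}$ uniquely. A secondary subtlety is the substation itself: since $\Vabc_0$ is fixed, it contributes zero mutual information with every non-slack bus and therefore serves purely as an externally supplied root rather than a participating node in the Chow-Liu graph, which justifies building the spanning tree on $\mathcal{M}^+$ rather than on $\mathcal{M}$.
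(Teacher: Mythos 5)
Your proof is correct, but it is not the argument the paper gives. You reduce the theorem to Lemma~\ref{thm:MI_sum} plus the classical optimality of Kruskal's greedy procedure for maximum-weight spanning trees: since, by symmetry of mutual information, the objective in (\ref{eq:MI_sum}) equals the total edge weight of an unrooted spanning tree on $\mathcal{M}^+$, the Chow--Liu tree is the global maximizer and hence, by Lemma~\ref{thm:MI_sum}, the best-fitting radial structure. The paper instead argues locally: using the chain rule for mutual information together with the conditional independence $\Vabc_k \perp \Vabc_l \mid \Vabc_i$ from Lemma~\ref{lemma:one_hop_indept}, it derives $I(\Vabc_i;\Vabc_k) = I(\Vabc_k;\Vabc_l) + I(\Vabc_i,\Vabc_k\mid\Vabc_l) \geq I(\Vabc_k;\Vabc_l)$ whenever $i$ is the parent of siblings $k$ and $l$, and concludes that true branches carry larger mutual information than non-adjacent pairs, so the greedy spanning tree coincides with the physical topology. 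The two routes buy different things: yours is the textbook Chow--Liu guarantee (the output is the KL-closest tree-structured distribution, a statement that holds regardless of whether the grid's independence assumptions are exact), whereas the paper's edge-dominance argument is what ties the maximizer to the \emph{true} grid topology under Assumption~\ref{assump:indept}. Your handling of the rooted-versus-unrooted correspondence and of the slack bus is sound; the paper leaves both points implicit.
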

\begin{proof}
This proof shows that the mutual information between connected buses is higher than those without a connection. If bus $i$ is the parent of bus $k$ and bus $l$ and $k\neq l$, by utilizing the chain rule property of the mutual information \cite{cover2012elements}, the joint mutual information is expressed as
\begin{eqnarray}
	&&I(\Vabc_i; \Vabc_k, \Vabc_l ) \nonumber \\
	&=& I(\Vabc_i;\Vabc_k) - I(\Vabc_i, \Vabc_k| \Vabc_l), \nonumber \\
	&=& I(\Vabc_k; \Vabc_l) - I(\Vabc_k, \Vabc_l|\Vabc_i).
\end{eqnarray}	
Since $\Vabc_k|\Vabc_i \perp \Vabc_l|\Vabc_i$, the conditional mutual information $I(\Vabc_k, \Vabc_l | \Vabc_i)$ is zero. Then
\begin{equation}
	I(\Vabc_i; \Vabc_k) = I(\Vabc_k; \Vabc_l) + I(\Vabc_i, \Vabc_k| \Vabc_l).
\end{equation}
Due to the fact that mutual information is always non-negative, $I(\Vabc_k; \Vabc_i) \geq I(\Vabc_k; \Vabc_l)$. Therefore, the mutual information between connected buses is larger than the mutual information between not connected buses. Then, by using the mutual information as the weight, the maximum weight spanning tree algorithm (Chow-Liu algorithm) solves (\ref{eq:MI_sum}) and estimates the distribution grid topology \cite{weng2017distributed, chow1968approximating}.
\end{proof}

The mutual information $I(\Vabc_i;\Vabc_k)$ can be computed as
\begin{equation}
	I(\Vabc_i;\Vabc_k) = H(\Vabc_i) + H(\Vabc_k) - H(\Vabc_i;\Vabc_k),\label{eq:mutual_info_entropy}
\end{equation}
where $H(\Vabc_i)$ denotes the entropy of $\Vabc_i$ and $H(\Vabc_i;\Vabc_k)$ denotes the cross-entropy of $\Vabc_i$ and $\Vabc_k$. An advantage using (\ref{eq:mutual_info_entropy}) is that many distributions have closed forms of entropy. In Assumption~\ref{assump:indept}, the incremental changes of voltages in distribution grids are assumed to follow Gaussian distribution approximately. Thus, the entropy of $\Vabc_i$ is 
\begin{equation}
\label{eq:gaussian_entropy}
H(\Vabc_i) = \frac{r}{2}\log(2\pi\exp{1}) + \frac{1}{2}\log(\det\Cov(\Vabc_i)),
\end{equation}
where $r$ denotes the dimension of the random vector $\Vabc_i$ and $\Cov$ denotes the covariance matrix. In some systems, the bus may not have all three phases. In this case, the disconnected phases are excluded in the computation of entropy.

A practical issue that exists in many distribution grids, especially the low-voltage distribution grids, is that the smart meter phase connectivity information is inaccurate. In some countries, about $10\%$ phase labels in low-voltage distribution grids are \revv{false or unknown}. Also, bus phase labels can change over time when new customers and DER devices are connected to grids \cite{wang2016phase}. As the correct bus phase connectivity information is critical to distribution grid plannings, the grid topology and phase connection should be estimated at the same time. \revv{To identify true bus phase labels, one may apply existing methods \cite{wang2016phase,short2013advanced,arya2011phase} to identify phase connectivity before estimating topology. Fortunately, our topology estimation method does not require this preprocessing step and is invariant to false phase labels.} Specifically, when voltage phases are incorrectly labeled, the elements in random vector $\Vabc_i$ are permuted. This permutation does not affect the computation of $\det\Cov(\Vabc_i)$, thus, does not change the values of $H(\Vabc_i)$ and $H(\Vabc_i; \Vabc_k)$. Therefore, the mutual information $I(\Vabc_i; \Vabc_k)$ is the same even the bus labels are incorrect. Section~\ref{sec:incorrect_phase} uses numerical examples to show that our algorithm can recover the topology perfectly with the presence incorrect phase labels.

\begin{algorithm}[htbp!]
\caption{Multiphase Distribution Grid Topology Estimation}
\label{alg:topology00}
\begin{algorithmic}[1]
\REQUIRE $\vabc_i[n]$ for $i \in \mathcal{M}^+$, $n = 1, \cdots N$
\FOR{$i,k \in \mathcal{M}^+$}
\STATE Compute empirical mutual information $I(\Vabc_i;\Vabc_k)$ based on $\vabc_i[n]$ and $\vabc_k[n]$ using (\ref{eq:mutual_info_entropy}) and (\ref{eq:gaussian_entropy}).
\ENDFOR
\STATE Sort all possible bus pair $(i,k)$ into non-increasing order by $I(\Vabc_i;\Vabc_k)$. Let $\mathcal{T}$ denote the sorted set.
\STATE Let $\widehat{\mathcal{E}}$ be the set of nodal pair comprising the maximum weight spanning tree. Set $\widehat{\mathcal{E}} = \emptyset$.
\FOR{$(i,k) \in \mathcal{T}$}
\IF {cycle is detected in $\hat{\mathcal{E}} \cup (i,k)$}
	\CONTINUE
\ELSE
	\STATE $\widehat{\mathcal{E}} \leftarrow \widehat{\mathcal{E}} \cup (i,k)$
\ENDIF
\IF {$|\widehat{\mathcal{E}}| == M$}
	\BREAK
\ENDIF
\RETURN $\widehat{\mathcal{E}}$
\ENDFOR
\end{algorithmic}
\end{algorithm}
The proposed algorithm for multi-phase distribution grid topology estimation is summarized in Algorithm~\ref{alg:topology00}. The well-known Kruskal's minimum weight spanning tree algorithm \cite{kruskal1956shortest,cormen2001introduction} can be applied to efficiently build the maximum weight spanning tree (Steps 6 - 16). The running time of the Kruskal's algorithm is $O(M\log M)$ for a radial distribution network with $M$ buses.

\subsection{Distribution Grid Topology Estimation using Voltage Magnitudes Only}
\label{sec:vmag}
Voltage phase angles are hard to acquire in distribution grids today because PMUs are not widely available. However, the proposed method can be extended to find the distribution grid topology only using voltage magnitudes $|\Vabc|$. As presented in Lemma~\ref{thm:MI_sum}, the key step of the proposed method is computing the mutual information of each bus voltage pair. Using chain rule, the mutual information $I(\Vabc_i; \Vabc_k)$ can be decomposed as
\begin{eqnarray}
	I(\Vabc_i; \Vabc_k) &=& I(|\Vabc_i|, \Vang_i; |\Vabc_k|, \Vang_k) \\
	&=& \underbrace{I(|\Vabc_i|, |\Vabc_k|)}_{\text{term A}} + \underbrace{I(|\Vabc_i|, \Vang_i \big| |\Vabc_k|)}_{\text{term B}} \nonumber \\
	&& + \underbrace{I(|\Vabc_i|, \Vang_k \big| |\Vabc_k|,\Vang_i)}_{\text{term C}}. \label{eq:breakdown}
\end{eqnarray}

\putFig{MI_breakdown}{Pairwise mutual information breakdown.}{0.9\linewidth}
Fig.~\ref{fig:MI_breakdown} empirically plots the pairwise mutual information of term A, B, C using the IEEE 123-bus system and the real data from PG\&E. The mutual information computed in (\ref{eq:breakdown}) is sorted by its value. The $x$-axis of Fig.~\ref{fig:MI_breakdown} is the index of the sorted mutual information. The $y$-axis is the mutual information of each part in (\ref{eq:breakdown}). The values of term A is much larger than term B and term C across all pairs of bus. The reason is that the changes of voltage angles are relatively small in distribution grids and thus, contain less information than voltage magnitudes. Based on our empirical observation in Fig.~\ref{fig:MI_breakdown}, $I(|\Vabc_i|; |\Vabc_k|)$ can be used to approximately estimate distribution grid structures. Specifically, the optimization problem in Lemma~\ref{thm:MI_sum} is approximated as
	\begin{equation}
	\label{eq:MI_sum_approx}
		\widehat{\boldsymbol{\Theta}} = \argmax_{\boldsymbol{\Theta} \subset \mathcal{M}^+} \sum_{i=1}^M I\left(|\Vabc_i|; |\Vabc_{\boldsymbol{\Theta}_i}\right|).
	\end{equation}
Many smart meters deployed can measure the voltages of all three phases. Therefore, the proposed algorithm can still apply when only smart meter voltage magnitude measurements are available. Algorithm~\ref{alg:topology_mag} summarizes the process for estimating topology using voltage magnitudes only.

\begin{algorithm}[htbp]
\caption{Multiphase Distribution Grid Topology Estimation using Voltage Magnitudes}
\label{alg:topology_mag}
\begin{algorithmic}[1]
\REQUIRE $|\vabc_i[n]|$ for $i \in \mathcal{M}^+$, $n = 1, \cdots N$
\FOR{$i,k \in \mathcal{M}^+$}
\STATE Compute empirical mutual information $I(|\Vabc_i|;|\Vabc_k|)$ based on $|\vabc_i[n]|$ and $|\vabc_k[n]|$ using (\ref{eq:mutual_info_entropy}) and (\ref{eq:gaussian_entropy}).
\ENDFOR
\STATE Sort all possible bus pair $(i,k)$ into non-increasing order by $I(|\Vabc_i|;|\Vabc_k|)$. Let $\mathcal{T}$ denote the sorted set.
\STATE Repeat Step 5 to Step 16 in Algorithm~\ref{alg:topology00}.
\end{algorithmic}
\end{algorithm}

\subsection{Topology Estimation of Weakly Mesh Distribution Grid}
In the previous part, the multi-phase distribution grid topology estimation method is proposed for the radial system. In practice, with the increase penetration of DERs, more distribution grids become to mesh structure for robustness \cite{liao2015distribution,liao2018urban,cavraro2019voltage}. In mesh structures, a bus has more than one parents. Assuming only one bus has two parents, the joint distribution is rewritten as
\begin{eqnarray}
	P(\Delta\mathbf{V}_{\mathcal{M}^+}) &=& P(\Vabc_M|\Vabc_{\text{pa}(M),1},\Vabc_{\text{pa}(M),2}) \nonumber \\
	&& \times \prod_{i=1}^{M-1}P(\Vabc_i|\Vabc_{\text{pa}(i)}),
\end{eqnarray}
where $\text{pa}(i),1$ and $\text{pa}(i),2$ represent the first and second parent of bus $i$. Following the same proof as Lemma~\ref{thm:MI_sum}, the KL distance $D(P(\Vabc_{\mathcal{M}^+})\|Q(\Vabc_{\mathcal{M}^+}))$ can be written as
\begin{eqnarray}
	&& D(P(\Vabc_{\mathcal{M}^+})\|Q(\Vabc_{\mathcal{M}^+})) \nonumber \\
	&=& -I(\Vabc_M; \Vabc_{\text{pa}(M),1},\Vabc_{\text{pa}(M),2}) \nonumber \\
	&& - \sum_{i =1}^{M-1} I(\Vabc_i; \Vabc_{\text{pa}(i)}) + \text{constant}.
\end{eqnarray}
Therefore, to find the topology for weakly meshed system (e.g., maximum number of parents is less than two), the optimization problem in Lemma~\ref{thm:MI_sum} is approximated as
\begin{equation}
	\widehat{\boldsymbol{\Theta}} = \argmax_{\boldsymbol{\Theta} \subset \mathcal{M}^+} I(\Vabc_M; \Vabc_{\boldsymbol{\Theta}_{M,1}}, \Vabc_{\boldsymbol{\Theta}_{M,2}}) + \sum_{i=1}^{M-1} I(\Vabc_i; \Vabc_{\boldsymbol{\Theta}_i}).
\end{equation} 
Specifically, for each mutual information $I(\Vabc_M; \Vabc_{\boldsymbol{\Theta}_{M,1}}, \Vabc_{\boldsymbol{\Theta}_{M,2}})$, $I(\Vabc_M; \Vabc_{\boldsymbol{\Theta}_{M,1}}, \Vabc_{\boldsymbol{\Theta}_{M,2}}) + \sum_{i=1}^{M-1} I(\Vabc_i; \Vabc_{\boldsymbol{\Theta}_i})$ is computed by performing the maximum weighted spanning tree algorithm. Then, the one with the largest total mutual information is chosen to estimate system topology. The computational complexity is $\mathcal{O}(M(M-1)\log(M-1)$.

The method discussed above can be generalized to systems with more buses that contain more parents. \revv{However, the computational complexity also increases significantly. Therefore, for distributed systems that contain multiple loops, we recommend to adopt topology estimation methods that are designed for heavily mesh grids, such as \cite{liao2018urban,deka2017topology}.}

\subsection{Bus Phase Identification and Correction}
\revv{The previous section demonstrates that even with false phase labels, our method can correctly identify the multi-phase distribution grid topology.} In many field applications, accurate grid topology is not sufficient. The correct information of bus phases is also critical in grid plannings and operations. This subsection proposes a data-driven method to identify bus phase information and correct the false phase labels.
\begin{figure}[h!]
\vspace{-1ex}
\centering
\begin{circuitikz}
	\draw (0,0)  node[anchor=east] {$\Delta V^c_i$} 
	to[R=$z^{cc}$, i=$\Delta I^c$, o-o] (6,0) 
	node[anchor=west] {$\Delta V^c_k$};
	
	\draw (0,1) node[anchor=east] {$\Delta V^b_i$}
	to[R=$z^{bb}$, i=$\Delta I^b$, o-o] (6,1)
	node[anchor=west] {$\Delta V^b_k$};
	
	\draw (0,2) node[anchor=east] {$\Delta V^a_i$}
	to[R=$z^{aa}$, i=$\Delta I^a$, o-o] (6,2)
	node[anchor=west] {$\Delta V^a_k$};
	
	\draw (0,2.2) node[anchor=south] {Bus $i$};
	\draw (6,2.2) node[anchor=south] {Bus $k$};
	
	\draw (0.7,2) 
	to[R=$z^{ab}$, o-o] (0.7,1);
	
	\draw (4,2)
	to[short, o-] (4,1);
	
	\draw(4,1)
	to[R, l=$z^{ac}$, -o] (4,0);
	
	\draw (1.5,1)
	to[R=$z^{bc}$, o-o] (1.5,0);

\end{circuitikz}
\caption{An example of the two-port three-phase circuit.}
\label{fig:three_phase}
\vspace{-2ex}
\end{figure}
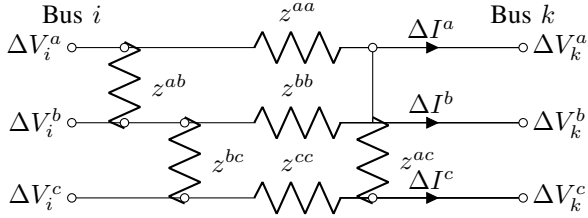

\begin{lemma}
\label{lemma:phase_ident}
In a multi-phase distribution grid, if two terminal buses of a branch are connected on the same phase, their phase voltage correlation is the largest.
\end{lemma}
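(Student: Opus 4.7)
The plan is to apply Carson's equation to the two-port three-phase circuit in Figure~\ref{fig:three_phase} and then compare the resulting covariances directly. For a branch $(i,k)$ with $3{\times}3$ primitive phase impedance matrix $\mathbf{Z}_{ik}\in\complex^{3\times 3}$, Carson's equation gives $\Vabc_i - \Vabc_k = \mathbf{Z}_{ik}\,[\Delta I^a,\Delta I^b,\Delta I^c]^T$, so component-wise $\Delta V^\phi_k = \Delta V^\phi_i - \sum_{\rho}z^{\phi\rho}\Delta I^\rho$ for each phase $\phi$. Assumption~\ref{assump:indept} makes voltage and current increments jointly zero-mean Gaussian, so comparing correlations reduces to comparing covariances.

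First I would expand the two relevant covariances, corresponding to a correct same-phase connection and a mislabeled cross-phase connection respectively:
\begin{align*}
\mathrm{Cov}(\Delta V^\phi_i,\Delta V^\phi_k) &= \mathrm{Var}(\Delta V^\phi_i) - \sum_{\rho} z^{\phi\rho}\,\mathrm{Cov}(\Delta V^\phi_i,\Delta I^\rho),\\
\mathrm{Cov}(\Delta V^\phi_i,\Delta V^\psi_k) &= \mathrm{Cov}(\Delta V^\phi_i,\Delta V^\psi_i) - \sum_{\rho} z^{\psi\rho}\,\mathrm{Cov}(\Delta V^\phi_i,\Delta I^\rho),
\end{align*}
for $\psi\neq\phi$. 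The decisive structural gap lies in the leading terms: Cauchy--Schwarz gives $\mathrm{Cov}(\Delta V^\phi_i,\Delta V^\psi_i)\leq \sqrt{\mathrm{Var}(\Delta V^\phi_i)\,\mathrm{Var}(\Delta V^\psi_i)}$, and this bound is loose in practice because injections on different phases are nearly independent under Assumption~\ref{assump:indept}, so $\mathrm{Cov}(\Delta V^\phi_i,\Delta V^\psi_i)$ is substantially smaller than $\mathrm{Var}(\Delta V^\phi_i)$.

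Next I would argue that the current-induced correction sums cannot erase this gap. Carson's equation produces impedance matrices in which the self-impedance $z^{\phi\phi}$ is appreciably larger in magnitude than the mutual impedances $z^{\phi\psi}$ for $\phi\neq\psi$, and the coefficients multiplying a common vector of covariances $\mathrm{Cov}(\Delta V^\phi_i,\Delta I^\rho)$ are therefore of comparable order in the two expressions. Consequently the leading-term advantage of the same-phase case dominates. Normalizing by phase-voltage standard deviations of comparable size then yields $\mathrm{Corr}(\Delta V^\phi_i,\Delta V^\phi_k) > \mathrm{Corr}(\Delta V^\phi_i,\Delta V^\psi_k)$ for every $\psi\neq\phi$, which is the claim.

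The main obstacle will be making this dominance argument rigorous without introducing explicit quantitative bounds on $|z^{\phi\phi}|/|z^{\phi\psi}|$ or on the ratio of voltage variance to current variance. I expect the proof to lean on the empirically validated distribution-line regime rather than establish a fully adversarial inequality, since a contrived impedance matrix together with a strongly cross-phase-correlated load profile could in principle reverse the ordering and with it the conclusion.
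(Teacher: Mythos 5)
Your starting point is the same as the paper's (the two-port circuit of Fig.~\ref{fig:three_phase} and the modified Carson's equation), but the comparison you build on it has a genuine gap, concentrated in exactly the two ``dominance'' claims you flag at the end as hard to make rigorous. First, the claim that $\mathrm{Cov}(\Delta V_i^\phi,\Delta V_i^\psi)$ is substantially smaller than $\mathrm{Var}(\Delta V_i^\phi)$ because injections on different phases are nearly independent does not follow: the phases are mixed by the off-diagonal entries of every upstream admittance block, so independent injections still produce strongly correlated cross-phase voltages. The paper's own Table~\ref{tab:phase_corr} shows a cross-phase correlation of $0.9526$ between $|\Delta V_{66}^c|$ and $|\Delta V_{65}^a|$, versus $1.0000$ for the true same-phase pair; your ``decisive structural gap'' in the leading terms can therefore be of order a few percent, far too small to absorb correction sums that you leave uncontrolled. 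Second, the modified Carson's equation does not give $|z^{\phi\phi}|\gg|z^{\phi\psi}|$ in the regime the paper works in: after dropping the reactive parts one has $z^{\phi\phi}\simeq r_{ik}+0.095$ and $z^{\phi\psi}\simeq 0.095$ $\Omega$/mile, and for short or low-resistance conductors these are comparable.

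The paper's proof avoids both claims by exploiting a sharper structural fact: once reactance is neglected, the mutual impedances are not merely small but \emph{identical} across all phase pairs ($z^{ab}\simeq z^{ac}\simeq 0.095$), so the cross-phase current contributions collapse into a single offset $C=0.095\times l \times(\Delta I^a+\Delta I^b+\Delta I^c)$ common to all three phases. This yields $\Delta V_i^\phi=\Delta V_k^\phi+C+r_{ik}\, l\, \Delta I^\phi$, an approximately affine relation between the same-phase end voltages, and hence a correlation near its maximum; no bound on the ratio of self to mutual impedance, and no bound on the cross-phase voltage covariance at a single bus, is needed. To salvage your covariance-expansion route you would have to import this cancellation explicitly rather than rely on order-of-magnitude dominance.
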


\begin{proof}
Using the modified Carson's equation \cite{kersting2006distribution}, the self impedance $z^{aa}$ and mutual impedances $z^{ab}$ and $z^{ac}$ of a multi-phase power line can be computed as follows:
\begin{eqnarray}
z^{aa} &=& r_{ik} + 0.095 + j0.121\times H^a_{ik} \Omega\text{/miles}, \\
z^{ab} &=& 0.095 + j0.121\times H^{ab}_{ik} \Omega\text{/miles}, \\
z^{ac} &=& 0.095 + j0.121\times H^{ac}_{ik} \Omega\text{/miles},
\end{eqnarray}
where $H^a_{ik}$, $H^{ab}_{ik}$, and $H^{ac}_{ik}$ are constants, and $r^a_{ik}$ is resistance of branch $i-k$ in $\Omega\text{/miles}$. In a distribution grid, the resistance is usually larger than reactance \cite{baran1989network}. Therefore, $z^{aa} \simeq r^a_{ik} + 0.095$ and $z^{ab} \simeq z^{ac} \simeq 0.095$.

For bus $i$ and bus $k$, the voltages and currents can be expressed as
\begin{equation}
	\begin{bmatrix}
		\Delta V_i^a \\ \Delta V_i^b \\ \Delta V_i^c
	\end{bmatrix}
	= \begin{bmatrix}
		\Delta V_k^a \\ \Delta V_k^b \\ \Delta V_k^c
	\end{bmatrix}
	+
	\begin{bmatrix}
		Z^{aa} & Z^{ab} & Z^{ac} \\
		Z^{ab} & Z^{bb} & Z^{bc} \\
		Z^{ac} & Z^{bc} & Z^{cc} \\
	\end{bmatrix}
	\begin{bmatrix}
		\Delta I^a \\ \Delta I^b \\ \Delta I^c
	\end{bmatrix},
\end{equation}
where $Z^{mn} = z^{mn}\times l$ and $l$ is the line length. The equations above can be simplified to 
\begin{align}
	\Delta V_i^a &= \Delta V_k^a + C + r_{ik} \times l \times \Delta I^a, \\
	\Delta V_i^b &= \Delta V_k^b + C + r_{ik} \times l \times \Delta I^b, \\
	\Delta V_i^c &= \Delta V_k^c + C + r_{ik} \times l \times \Delta I^c,
\end{align}
where $C = 0.095\times l \times (\Delta I^a + \Delta I^b + \Delta I^c)$. The phase voltages at the two ends of a branch are in a linear relationship. Therefore, their correlation is the largest.
\end{proof}

Table.~\ref{tab:phase_corr} shows the voltage magnitude corrections among bus 64, 65, and 66 in IEEE 123-bus system. There are no PMUs in the system. Since PMUs can provide accurate phase measurements, the bus phase label identification problem is trivial with the presence of PMUs. In the 123-bus system, bus 64 and 65 are connected on phase $b$. Bus 65 and 66 are connected on phase $c$. In Table.~\ref{tab:phase_corr}, the correlation between bus 64 and 65 on phase $b$ is much larger than other pairs. Similar observation holds for bus 65 and 66. Thus, to identify bus phases in a distribution grid, the correlation check can be applied from the substation of the radial network down to all leaf buses. The reason is that the substation bus label information is usually reliable. \revv{Then, the bus phase can be correctly identified, following the paths of estimated grid topology. Note that, the metering device installed at each bus can provide the number of phases at each bus. Therefore, this method is eligible for all types of bus. The same approach can also be applied to diagnose the correctness of the bus phase labels.}

\begin{table}[htbp]
\caption{Voltage Magnitude Correlations Between Bus 64, 65, and 66.}
\label{tab:phase_corr}
\centering
\begin{tabular}{|c|c|c|c|}
\hline
& $|\Delta V_{65}^a|$ & $|\Delta V_{65}^b|$ & $|\Delta V_{65}^c|$ \\
\hline
$|\Delta V_{64}^b|$ & 0.4956 & 0.9996 & 0.5332 \\
\hline
$|\Delta V_{66}^c|$ & 0.9526 & 0.5479 & 1.0000 \\
\hline
\end{tabular}
\end{table}

\section{Unbalanced Multi-Phase Distribution Grid Topology Estimation with Incorrect Phase Labels}
\label{sec:alg_unbalance}
The results in the previous section illustrate the topology estimation for balanced multi-phase systems with incorrect phase labels. However, it is not directly expendable to unbalanced multi-phase systems. As shown in Fig.~\ref{fig:three_phase}, the voltages and currents are coupled cross different phases. Also, the unbalanced loads on each phase lead to the voltages angles are not separated by $2\pi/3$. To address these issues, the grid is transformed using sequence component frameworks. The voltage phasor is decomposed into three balanced phasors known as positive sequence, negative sequence, and zero sequence. The multi-phase voltage $\Vabc_i$ in phase frame is decomposed as follows:
\begin{equation}
\label{eq:phz_transform}
	\Vabc_i
	= 
	\begin{bmatrix}
		1 & 1 & 1 \\
		h^2 & h & 1\\
		h & h^2 & 1
	\end{bmatrix}
	\begin{bmatrix}
		\Delta V_a^p \\ \Delta V_a^n \\ \Delta V_a^z
	\end{bmatrix}
	=
	\mathbf{H}\Delta\mathbf{V}_a^{pnz},
\end{equation}
where $h = \exp{j2\pi/3}$, $h^2 = \exp{-j2\pi/3}$, $\Delta V_a^p, \Delta V_a^n, \Delta V_a^z$ denote positive-sequence, negative-sequence, and zero-sequence voltage on phase $a$. $\Vpnz_a$ is called the sequence voltage of phase $a$. Since each sequence component system is balanced, the sequence component voltages of phase $b$ and phase $c$ are the phase shifts of voltage on phase $a$, $\Vpnz_a$. Thus, the sequence components voltages of phase $b$ and $c$ are not required to compute. In the following text, $\Vpnz_i$ denotes the sequence voltage vector of bus $i$ on phase $a$. The sequence voltages can be computed as follows:
\begin{equation}
\label{eq:abc_transform}
	\Vpnz = \mathbf{H}^{-1}\Vabc = \frac{1}{3}\mathbf{H}^H\Vabc,
\end{equation}
where the operator $H$ denotes the Hermitian transpose. The same transformation can also be applied to the multi-phase current phasors and admittance matrix, i.e.,
\begin{eqnarray}
	\Ipnz_i &=& \mathbf{H}^{-1}\Iabc_i, \\
	\Ypnz_{ik} &=& \mathbf{H}^{-1}\Yabc_{ik}\mathbf{H}.
\end{eqnarray}
A highlight is that the transformation above is applied to the multi-phase voltage phasors, current phasors, and admittance matrix at a particular bus, not the entire system. Therefore, if two buses are not connected, e.g., $\Yabc_{ik} = \Bzero$, $\Ypnz_{ik} = \Bzero$. Therefore, finding topology in phase frame is equivalent to finding topology in sequence component frame. The proof of $P(\Vpnz_{\mathcal{M}^+})=\prod_{i=1}^M P(\Vpnz_i|\Vpnz_{\text{pa}(i)})$ is required to apply the mutual information-based maximum weight spanning tree algorithm (Chow-Liu algorithm).

The transformation process in (\ref{eq:abc_transform}) does not require the correct phase labels in the phase frame. The reason is that when the bus phase labels are incorrect, the decomposition in (\ref{eq:abc_transform}) will become either the sequence component frame $\Vpnz_b$ or $\Vpnz_c$. Since both $\Vpnz_b$ or $\Vpnz_c$ are both balanced systems, the same method proposed for $\Vpnz_a$ can be applied to estimate system topology.

\begin{lemma}[Data Processing Inequality \cite{cover2012elements}]
\label{thm:data_processing}
	If random vectors $\mbf{X}, \mbf{Y}, \mbf{Z}$ forms a Markov Chain, i.e., $\mbf{X} \rightarrow \mbf{Y} \rightarrow \mbf{Z}$, $I(\mbf{X}; \mbf{Y}) \geq I(\mbf{X}; \mbf{Z})$. Also, for the function of $\mbf{Y}$, $g(\mbf{Y})$, $I(\mbf{X}; \mbf{Y}) \geq I(\mbf{X}; g(\mbf{Y}))$.
\end{lemma}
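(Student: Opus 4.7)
The plan is to prove both inequalities by applying the chain rule of mutual information in two different orderings to the joint quantity $I(\mbf{X}; \mbf{Y}, \mbf{Z})$ and then exploiting the Markov condition to annihilate one of the resulting conditional terms. First I would write
\[
I(\mbf{X}; \mbf{Y}, \mbf{Z}) \;=\; I(\mbf{X}; \mbf{Y}) + I(\mbf{X}; \mbf{Z} \mid \mbf{Y}) \;=\; I(\mbf{X}; \mbf{Z}) + I(\mbf{X}; \mbf{Y} \mid \mbf{Z}),
\]
which is a routine application of the chain rule and requires nothing beyond the existence of the joint distribution of $(\mbf{X},\mbf{Y},\mbf{Z})$.

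The key step is to invoke the Markov-chain hypothesis $\mbf{X} \rightarrow \mbf{Y} \rightarrow \mbf{Z}$, which is precisely the statement that $\mbf{X}$ and $\mbf{Z}$ are conditionally independent given $\mbf{Y}$. This independence forces $I(\mbf{X}; \mbf{Z} \mid \mbf{Y}) = 0$, so the double decomposition collapses to $I(\mbf{X}; \mbf{Y}) = I(\mbf{X}; \mbf{Z}) + I(\mbf{X}; \mbf{Y} \mid \mbf{Z})$. Because conditional mutual information is always non-negative, this immediately yields $I(\mbf{X}; \mbf{Y}) \geq I(\mbf{X}; \mbf{Z})$, establishing the first inequality.

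For the second claim I would observe that any deterministic function $g(\mbf{Y})$ automatically satisfies $\mbf{X} \rightarrow \mbf{Y} \rightarrow g(\mbf{Y})$, since once $\mbf{Y}$ is conditioned on, the value $g(\mbf{Y})$ is a constant and hence trivially independent of $\mbf{X}$. Applying the first part with $\mbf{Z}$ replaced by $g(\mbf{Y})$ then gives $I(\mbf{X}; \mbf{Y}) \geq I(\mbf{X}; g(\mbf{Y}))$.

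Since this is a classical textbook result, I do not anticipate a real obstacle. The only point requiring a bit of care is rigorously justifying $I(\mbf{X};\mbf{Z} \mid \mbf{Y}) = 0$ from the Markov property, which reduces to expanding the definition in terms of the KL divergence and substituting $p(\mbf{x}, \mbf{z} \mid \mbf{y}) = p(\mbf{x} \mid \mbf{y})\, p(\mbf{z} \mid \mbf{y})$. Given that the lemma is cited only to support downstream mutual-information manipulations, I would simply reference Cover and Thomas rather than reproduce the standard derivation.
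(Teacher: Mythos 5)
Your proof is correct and is exactly the canonical argument: the two-way chain-rule expansion of $I(\mbf{X};\mbf{Y},\mbf{Z})$, the vanishing of $I(\mbf{X};\mbf{Z}\mid\mbf{Y})$ under the Markov hypothesis, non-negativity of conditional mutual information, and the observation that $\mbf{X}\rightarrow\mbf{Y}\rightarrow g(\mbf{Y})$ reduces the second claim to the first. The paper itself gives no proof --- it states the lemma with a citation to Cover and Thomas --- and the derivation you outline is precisely the one in that reference, so there is nothing to reconcile.
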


\begin{lemma}
\label{thm:cond_indept_phase}
	Consider a multi-phase distribution grid and assume that the current injection increment at each bus is approximately independent, e.g., $\Iabc_i \perp \Iabc_k$ for $i\neq k$. Given the nodal bus voltage increment of bus $i$ in sequence component frame, the nodal bus voltage increments of every pair of bus $i$'s children are conditionally independent, i.e., $\Vpnz_k \perp \Vpnz_l | \Vpnz_i$ for $ k,l \in \mathcal{C}(i)$ and $k\neq l$.
\end{lemma}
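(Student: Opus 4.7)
The plan is to reduce the sequence-frame conditional independence to the phase-frame version (Lemma~\ref{lemma:one_hop_indept}) by exploiting the invertibility of the symmetrical-components transformation $\mathbf{H}$, and then close the argument with the Data Processing Inequality (Lemma~\ref{thm:data_processing}). The first observation is that at every bus $j$ the decomposition $\Vpnz_j = \mathbf{H}^{-1}\Vabc_j$ given in (\ref{eq:abc_transform}) is a bijective deterministic linear map: since $\mathbf{H}^{-1}=\tfrac{1}{3}\mathbf{H}^H$ is well defined, $\Vabc_j$ and $\Vpnz_j$ are measurable functions of one another and generate the same sigma-algebra, so conditioning on one is equivalent to conditioning on the other.

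Next, I would invoke Lemma~\ref{lemma:one_hop_indept} directly in the phase frame: under the approximate current-injection independence of Assumption~\ref{assump:indept}, $\Vabc_k \perp \Vabc_l \mid \Vabc_i$ for every pair $k,l \in \mathcal{C}(i)$ with $k\neq l$, which is equivalent to $I(\Vabc_k;\Vabc_l \mid \Vabc_i)=0$. Applying the conditional form of Lemma~\ref{thm:data_processing} with the deterministic map $g=\mathbf{H}^{-1}$ applied separately to each argument gives the inequality $I(\Vabc_k;\Vabc_l\mid \Vabc_i) \geq I(\Vpnz_k;\Vpnz_l\mid \Vpnz_i)$, and applying it again with the inverse map $g=\mathbf{H}$ yields the reverse inequality. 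Combining the two,
\begin{equation}
I(\Vpnz_k;\Vpnz_l \mid \Vpnz_i) = I(\Vabc_k;\Vabc_l \mid \Vabc_i) = 0,
\end{equation}
which is exactly the desired statement $\Vpnz_k \perp \Vpnz_l \mid \Vpnz_i$.

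The main obstacle is the double application of Lemma~\ref{thm:data_processing}: as stated in the paper, the lemma delivers only one direction of inequality, so the argument must explicitly invoke the nonsingularity of $\mathbf{H}$ to justify the reverse inequality, and must promote the unconditional data-processing inequality to its conditional version (a standard move, but one that requires handling the conditioning variable $\Vabc_i$ symmetrically with $\mathbf{H}^{-1}$). A secondary subtlety is that (\ref{eq:abc_transform}) records only the phase-$a$ sequence triple, so one should note that this triple nevertheless encodes the full three-phase vector $\Vabc_j$ through multiplication by $\mathbf{H}$, confirming bijectivity; once this is established the remainder of the argument is immediate.
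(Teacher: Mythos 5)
Your proof is correct, but it takes a genuinely different route from the paper's. The paper proceeds in two stages entirely within the sequence-component frame: it first uses Lemma~\ref{thm:data_processing} twice (via the Markov chains $\Iabc_i \rightarrow \Iabc_k \rightarrow \Ipnz_k$ and $\Ipnz_i \rightarrow \Iabc_i \rightarrow \Ipnz_k$) to establish $\Ipnz_i \perp \Ipnz_k$, then observes that $\Ypnz_{ik} = \mathbf{H}^{-1}\Yabc_{ik}\mathbf{H}$ preserves the sparsity pattern of the nodal admittance matrix, so the nodal equation $\Ypnz_{\mathcal{M}^+}\Vpnz_{\mathcal{M}^+} = \Ipnz_{\mathcal{M}^+}$ has the same form as (\ref{eq:nodal_eqn}) and the proofs of Lemma~\ref{thm:cond_indept} and Lemma~\ref{lemma:one_hop_indept} can be rerun verbatim. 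You instead take the phase-frame conclusion $\Vabc_k \perp \Vabc_l \mid \Vabc_i$ of Lemma~\ref{lemma:one_hop_indept} as given and push it through the per-bus bijection $\mathbf{H}^{-1}$, using the fact that conditional independence is invariant under invertible deterministic maps of each argument (equivalently, $\sigma(\Vpnz_j)=\sigma(\Vabc_j)$). Your argument is shorter and bypasses the current-injection step entirely; its only exposure is that it leans on a conditional version of Lemma~\ref{thm:data_processing} that the paper states only unconditionally --- you flag this correctly, and since for bijections one has the exact identity $I(\Vpnz_k;\Vpnz_l\mid\Vpnz_i)=I(\Vabc_k;\Vabc_l\mid\Vabc_i)$ rather than a two-sided squeeze, this is easily made rigorous. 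What the paper's longer route buys is the intermediate facts ($\Ipnz_i \perp \Ipnz_k$ and the structure of $\Ypnz$) that it reuses when asserting Theorem~\ref{thm:chow_liu_seq}, whereas your route would need to re-import those separately if they were needed downstream.
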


\begin{proof}
The first step of the proof is showing that the current injection increment are independent in sequence component frame, given the current injection increment at each bus is independent in phase frame. There are multiple ways to prove it. Here, an information theoretical approach is adopted.

When two random vectors are independent, their mutual information is zero \cite{cover2012elements}, e.g., $I(\Iabc_i; \Iabc_k) = 0$ if $\Iabc_i \perp \Iabc_k$. Since $\Ipnz_k = \mathbf{H}^{-1}\Iabc_k$ is a linear transformation of $\Iabc_k$, these random vectors form a Markov Chain, i.e., $\Iabc_i \rightarrow \Iabc_k \rightarrow \Ipnz_k$. Applying Lemma~\ref{thm:data_processing},
\begin{equation}
I(\Iabc_i; \Ipnz_k) \leq I(\Iabc_i; \Iabc_k) = 0.
\end{equation}
Because the mutual information is non-negative, $I(\Iabc_i; \Ipnz_k)=0$. $\Iabc_i = \mathbf{H}\Ipnz_i$ is a function of $\Ipnz_i$. Thus, another Markov Chain is formed: $\Ipnz_i \rightarrow \Iabc_i \rightarrow \Ipnz_k$. Applying Lemma~\ref{thm:data_processing} again, 
\begin{equation}
I(\Ipnz_i; \Ipnz_k) \leq I(\Iabc_i; \Ipnz_k) = 0.
\end{equation}

$I(\Ipnz_i; \Ipnz_k)$ is zero due to the non-negativity of mutual information. Therefore, if the current injections are independent in phase frame, they are also independent in sequence component frame.

The second step of the proof is showing that the conditional independence of nodal voltages holds in sequence component frame. The example in Fig.~\ref{fig:6bus_example} is adopted to illustrate it. In the sequence component frame, the nodal equation of the system in Fig.~\ref{fig:6bus_example} is $\Ypnz_{\mathcal{M}^+}\Vpnz_{\mathcal{M}^+} = \Ipnz_{\mathcal{M}^+}$, where $\Ypnz_{\mathcal{M}^+}$ is
\begin{equation}
\begin{bmatrix}
    \Ypnz_{11} & \Ypnz_{12} & \Ypnz_{13} & \Bzero & \Bzero & \Bzero & \Bzero\\
    \Ypnz_{21} & \Ypnz_{22} & \Bzero & \Ypnz_{24} & \Ypnz_{25} & \Bzero & \Bzero\\
    \Ypnz_{31} & \Bzero & \Ypnz_{33} & \Bzero & \Bzero & \Ypnz_{36} & \Ypnz_{37}\\
    \Bzero & \Ypnz_{42} &  \Bzero & \Ypnz_{44} & \Bzero & \Bzero & \Bzero\\
    \Bzero & \Ypnz_{52} &  \Bzero & \Bzero & \Ypnz_{55} & \Bzero & \Bzero\\
    \Bzero & \Bzero &  \Ypnz_{63} & \Bzero & \Bzero & \Ypnz_{66} & \Bzero \\
    \Bzero & \Bzero &  \Ypnz_{73} & \Bzero & \Bzero & \Bzero & \Ypnz_{77}
\end{bmatrix},
\end{equation}
$\Ypnz_{ik} = \Ypnz_{ki}$, and $\Ypnz_{ii} =  -\sum_{k=1,k\neq i}^7 \Ypnz_{ik}$. If $\Ypnz_{ik} = \Bzero$, there is no branch between bus $i$ and $k$. This equation is in the same format as (\ref{eq:nodal_eqn}). Since $\Ipnz_i \perp \Ipnz_k$ for all $i \neq k$, the same method used in the proof of Lemma~\ref{thm:cond_indept} and Lemma~\ref{lemma:one_hop_indept} can show the conditional independence of nodal voltages in sequence component frame.
\end{proof}

With Lemma~\ref{thm:cond_indept_phase}, the conditional independence of current injection is proved to hold in the sequence component frame as well, e.g., $P(\Vpnz_{\mathcal{M}^+}) = \prod_{i=1}^M P(\Vpnz_i|\Vpnz_{\text{pa}(i)})$. Since the sequence component system is a balanced multi-phase system, the Chow-Liu algorithm can estimate topology in the sequence component frame.

\begin{theorem}
\label{thm:chow_liu_seq}
	In an unbalanced radial distribution grid, the topology can be estimated by solving the following problem
	\begin{equation}
	\widehat{\boldsymbol{\Theta}} = \argmax_{\boldsymbol{\Theta} \subset \mathcal{M}^+} \sum_{i=1}^M I(\Vpnz_i; \Vpnz_{\boldsymbol{\Theta}_i}).
	\end{equation}
	Also, the mutual information-based maximum weight spanning tree algorithm (Chow-Liu algorithm) solves the problem above.	
\end{theorem}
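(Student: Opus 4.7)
The plan is to reduce Theorem~\ref{thm:chow_liu_seq} to the balanced-system result already established in Theorem~\ref{thm:chow-liu} by exploiting the sequence-component transformation and Lemma~\ref{thm:cond_indept_phase}. The key insight is that once the unbalanced grid is expressed in the $pnz$ frame, the decoupled sequence systems behave like a balanced multi-phase network, so the entire machinery developed in Section~\ref{sec:alg} transfers over. I would begin by invoking Lemma~\ref{thm:cond_indept_phase} to conclude that
\begin{equation}
P(\Vpnz_{\mathcal{M}^+}) = \prod_{i=1}^M P(\Vpnz_i \mid \Vpnz_{\text{pa}(i)}),
\end{equation}
so that the unknown radial structure is fully encoded in the parent map $\boldsymbol{\Theta}=\{\text{pa}(1),\ldots,\text{pa}(M)\}$.

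Next, I would mimic the KL-divergence argument from Lemma~\ref{thm:MI_sum}. Defining a candidate tree-factorized distribution $Q(\Vpnz_{\mathcal{M}^+};\boldsymbol{\Theta})=\prod_{i=1}^M P(\Vpnz_i\mid \Vpnz_{\boldsymbol{\Theta}_i})$, the KL divergence $D(P\|Q)$ can be expanded using $P(\Vpnz_i\mid \Vpnz_{\boldsymbol{\Theta}_i})=P(\Vpnz_i,\Vpnz_{\boldsymbol{\Theta}_i})/P(\Vpnz_{\boldsymbol{\Theta}_i})$ and the definition of entropy, yielding
\begin{equation}
D(P\|Q) = -\sum_{i=1}^M I(\Vpnz_i;\Vpnz_{\boldsymbol{\Theta}_i}) + \text{const},
\end{equation}
where the constant collects single-node entropy terms independent of $\boldsymbol{\Theta}$. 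Minimizing the KL divergence is therefore equivalent to the stated maximization, establishing the first half of the theorem.

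For the second half, I would carry out essentially the same chain-rule argument used in the proof of Theorem~\ref{thm:chow-liu}, but now using the children-conditional-independence statement $\Vpnz_k \perp \Vpnz_l \mid \Vpnz_i$ for $k,l \in \mathcal{C}(i)$ guaranteed by Lemma~\ref{thm:cond_indept_phase}. This gives $I(\Vpnz_i;\Vpnz_k) = I(\Vpnz_k;\Vpnz_l) + I(\Vpnz_i,\Vpnz_k\mid \Vpnz_l) \geq I(\Vpnz_k;\Vpnz_l)$, so the mutual information between a parent and child dominates that between siblings. By standard matroid arguments for the maximum-weight spanning tree, Kruskal-style greedy selection (Chow-Liu) then recovers $\widehat{\boldsymbol{\Theta}}$ and hence the grid topology.

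The main obstacle I anticipate is ensuring that the chain-rule decomposition and the conditional independence transfer cleanly to the complex-valued sequence quantities: the $pnz$ frame mixes phases via the complex matrix $\mathbf{H}$, and one has to be careful that the per-node mutual information and conditional mutual information are well-defined and non-negative on these complex vectors. However, since $\mathbf{H}$ is invertible and Lemma~\ref{thm:cond_indept_phase} has already done the heavy lifting of porting the independence structure through the transform, the remaining work is essentially bookkeeping. Thus the theorem follows by appealing to Lemma~\ref{thm:cond_indept_phase} plus the proofs of Lemma~\ref{thm:MI_sum} and Theorem~\ref{thm:chow-liu}, applied verbatim in the $pnz$ frame.
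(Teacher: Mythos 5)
Your proposal is correct and follows essentially the same route the paper intends: the paper explicitly omits this proof, stating it is "similar to the proofs of Lemma~\ref{thm:MI_sum} and Theorem~\ref{thm:chow-liu}," and your argument is precisely that reduction — use Lemma~\ref{thm:cond_indept_phase} to obtain the tree factorization of $P(\Vpnz_{\mathcal{M}^+})$, repeat the KL-divergence expansion to get the mutual-information maximization, and repeat the chain-rule inequality to justify the maximum-weight spanning tree. Your added caution about the complex-valued $pnz$ quantities is a reasonable bookkeeping note but does not change the argument.
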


The proof of Theorem~\ref{thm:chow_liu_seq} is omitted here because it is similar to the proofs of Lemma~\ref{thm:MI_sum} and Theorem~\ref{thm:chow-liu}. The topology estimation algorithm for unbalanced multi-phase distribution grids is summarized in Algorithm~\ref{alg:topology}.
\begin{algorithm}[htbp]
\caption{Unbalanced Multiphase Distribution Grid Topology Estimation via Sequence Component Frame}
\label{alg:topology}
\begin{algorithmic}[1]
\REQUIRE $\vabc_i[n]$ for $i \in \mathcal{M}^+$, $n = 1, \cdots N$
\STATE Compute voltage phasor $\vpnz_i[n]$ using (\ref{eq:abc_transform}) for $i \in \mathcal{M}^+$, $n = 1, \cdots N$.
\FOR{$i,k \in \mathcal{M}^+$}
\STATE Compute empirical mutual information $I(\Vpnz_i;\Vpnz_k)$ based on $\vpnz_i[n]$ and $\vpnz_k[n]$.
\ENDFOR
\STATE Sort all possible bus pair $(i,k)$ into non-increasing order by $I(\Vpnz_i,\Vpnz_k)$. Let $\mathcal{T}$ denote the sorted set.
\STATE Repeat Step 5 to Step 16 in Algorithm~\ref{alg:topology00}.
\end{algorithmic}
\end{algorithm}

The phase angles of $\Vabc$ are needed for performing the phase frame transformation in (\ref{eq:abc_transform}). However, as discussed in Section~\ref{sec:vmag}, PMUs have not been widely available in distribution grids. To only use voltage magnitudes to address the unbalance problem, the following approximation is proposed:
\begin{equation}
\label{eq:vpnz_mag}
\Vpnz = \mathbf{H}^{-1}|\Vabc|.
\end{equation}
In this approximation, only the voltage magnitudes in phase frame are used to compute voltages in sequence component frame. As demonstrated in Section~\ref{sec:sim}, this approximation does not introduce significant errors to topology estimation. In addition, the grid topology is identical in phase frame and in sequence component frame. Therefore, once the unbalanced grid topology is estimated, Lemma~\ref{lemma:phase_ident} is applied to identify the phases of all buses.

\section{Simulations and Numerical Results}
\label{sec:sim}
In this section, the proposed algorithms for balanced and unbalanced grid topology estimation are validated on on \revv{IEEE $37$-bus, $123$-bus (Fig.~\ref{fig:123bus}), and $8500$-bus distribution networks \cite{kersting2001radial, dugan2010ieee} using data from USA (California and Taxes) and Europe. Also, we validate the proposed algorithms on systems with different levels of DER penetration and the presence of incorrect phase labels. Furthermore, sensitivity analysis is conducted on data lengths, data accuracy, load patterns, and data resolutions.}

\revv{$37$-bus, $123$-bus, and $8500$-bus systems are multi-phase.} In each network, the feeder or substation is selected as the slack bus (bus $0$). The historical data have been preprocessed by the GridLAB-D \cite{chassin2008gridlab}, an open source simulator for distribution grid. The load profile from PG\&E is used to simulate the power system behavior in a practical pattern. This profile contains anonymized and secure hourly smart meter readings over $110,000$ PG\&E residential customers for a period of one year spanning from $2011$ to $2012$. Since both $37$-bus and $123$-bus systems are primary distribution grids, the real power at each bus is an aggregation of $10 - 100$ customers. The load buses in both systems are unnecessary to be multi-phase. The details of bus and branch phases are given in \cite{kersting2001radial}. The voltage data at each bus are used for topology estimation. Fig.~\ref{fig:flowchart2} summarizes the overall process of topology estimation and bus phase identification.

\putFig{flowchart2}{Flow chart of topology estimation and bus phase identification process.}{1\linewidth}

\putFig{123bus}{IEEE $123$-bus distribution test case.}{0.9\linewidth}


The PG\&E data set does not contain the reactive power. The reactive power $q_i^\phi[n]$ on phase $\phi$ of bus $i$ at time $n$ is computed according to a random lagging power factor $pf^\phi_i[n]$, which follows a uniform distribution, e.g., $pf^\phi_i[n] \sim \Unif(0.8,0.95)$. To obtain voltage time-series, i.e., $\mathbf{v}_i[n]$, the power flow analysis is run to generate the hourly states of the power system over a year. $N = 8760$ measurements are obtained at each bus. Section~\ref{sec:length} investigates the data length requirement for topology estimation. The loads attached to each phase are unequal. Hence, the systems are unbalanced. Fig.~\ref{fig:P_37bus2} and Fig.~\ref{fig:P_123bus2} show the hourly aggregated real powers on each phase in 37-bus and 123-bus systems. Although each phase has the similar pattern over time, the magnitudes of real powers are different on each phase. Therefore, the testing systems are unbalanced. 

\putFig{P_37bus2}{Hourly aggregated real powers on each phase in 37-bus system.}{1.1\linewidth}
\putFig{P_123bus2}{Hourly aggregated real powers on each phase in 123-bus system.}{1.1\linewidth}

\subsection{Distribution Grid Topology Estimation Error Rate}
This section discusses the performance on grid topology estimation. The error rate (ER) is employed as the performance evaluation metric, which is defined as
\begin{equation}
\text{ER} = \frac{1}{|\mathcal{E}|}\bigg(\underbrace{\sum_{(i,k) \in \widehat{\mathcal{E}}} \indic{(i,k) \notin \mathcal{E}}}_{\text{false estimation}} + \underbrace{\sum_{(i,k) \in \mathcal{E}} \indic{(i,k) \notin \widehat{\mathcal{E}}}}_{\text{missing}}\bigg)\% 
\end{equation}
where $\widehat{\mathcal{E}}$ denotes the edge set estimates, $|\mathcal{E}|$ is the size of $\mathcal{E}$, and $\indic{.}$ is the indicator function. The first and second terms represent the number of falsely estimated branches and the number of missing branches, respectively.


Table~\ref{tab:mismatch} summarizes the topology estimation error rates of unbalanced multi-phase 37- and 123-bus systems using noiseless data. When phase angle data are available, our algorithm perfectly estimates the grid topology. When only voltage magnitudes are available, our algorithm can still estimate the grid topology perfectly. This result also verifies that our approximation in (\ref{eq:vpnz_mag}) is sufficient for topology estimation.
\begin{table}[htbp]
\caption{Topology Estimation Error Rate without DERs.}
\label{tab:mismatch}	
\centering
\begin{tabular}{|c||c|c||c|c|}
	\hline
	& \multicolumn{2}{c||}{Proposed Method} & \multicolumn{2}{c|}{Modified Single-Phase} \\
	& \multicolumn{2}{c||}{} & \multicolumn{2}{c|}{Method} \\
	\hline
	System & $\Vpnz$ & $|\Vpnz|$ & $\Vpnz$ & $|\Vpnz|$ \\
	\hline
	37-bus & $0.00\%$ & $0.00\%$ & $5.56\%$ & $8.33\%$ \\
	\hline
	123-bus & $0.00\%$ & $0.00\%$ & $1.64\%$ & $1.64\%$ \\
	\hline
\end{tabular}
\end{table}

The proposed algorithm is also compared with a modified single-phase topology estimation in \cite{weng2017distributed}. Specifically, the single-phase topology estimator is applied to each phase individually. Then, the single-phase topology estimates are combined to produce the multi-phase system topology. As shown in Table~\ref{tab:mismatch}, the modified single-phase method has worse performance than the proposed algorithm. The key reason is that the modified single-phase method does not consider the voltage coupling across phases.

The proposed algorithm is compared with the method in \cite{zhao2018learning}, which is also based on minimizing the KL distance and searches the correct operational topology from all possible topology candidates. For 37-bus system, the error rate of \cite{zhao2018learning} is $5.6\%$. For 123-bus system, the error rate is $8.2\%$. The high error rates are due to the DC approximation in \cite{zhao2018learning}. For unbalanced distribution grids, the DC approximation does not hold generally.

In addition, we validate our algorithm on IEEE 8500-node distribution system \cite{dugan2010ieee}, which contains both low-voltage and medium-voltage buses. The error rate that using $\Vpnz$ is 15.7\%. Most incorrect identified branches are near the low-voltage grid feeders. In many systems, the locations and connectives of the low-voltage grid feeders are accurate. Therefore, with the prior knowledge of low-voltage grid feeders, the error rate is reduced to 3.8\%.

\begin{table}[htbp]
	\caption{Topology Estimation Computation Time (seconds)}
	\label{tab:compute_time}	
	\centering
	\begin{tabular}{|c|c|c|c|}
	\hline
	System & \makecell{Computation time of \\ mutual information} & \makecell{Computation time of \\ maximum weight \\ spanning tree} & \makecell{Total \\ time} \\
	\hline
	$37$-bus & $0.476$ & $0.499$ & $0.975$ \\
	\hline
	$123$-bus & $2.978$ & $3.114$ & $6.092$ \\
	\hline 
	$8500$-bus & $244.345$ & $478.111$ & $722.456$ \\
	\hline 
	\makecell{$8500$-bus \\ (parallel)} & $89.090$ & $94.028$ & $183.118$ \\
	\hline
	\end{tabular}
\end{table}
\revv{
Table~\ref{tab:compute_time} summarizes the average computational time of the proposed algorithm on different systems over 1000 Monte Carlo simulation iterations. For $37$-bus and $123$-bus systems, our algorithm takes a few seconds to report the estimated topology, which makes it suitable for real-time monitoring. For $8500$-bus system, the computational time of both mutual information and maximum weight spanning tree grows up. Though the computational time for a large system is high, some power system properties can help to speed up the topology estimation process. As mentioned above, the locations and connectives of low-voltage grid feeders are accurate. Hence, for large-scale system that has both low-voltage and medium-voltage systems, the topology estimation problem can be performed in two steps: 1) only identify the topology of low-voltage grids and 2) only estimate the topology of medium-voltage grids. Since each low-voltage grid operates independently, the topology estimation process can run in parallel. As indicated in Table~\ref{tab:compute_time}, by decomposing a large-scale grid into multiple small sub-grids, the computational time of topology estimation is reduced by $75\%$. A highlight is that in the parallel computation, the maximum computational time is bounded by the largest low-voltage grid. If every low-voltage grid is small (e.g., similar size as the $123$-bus system), the computational time can be much less. Another highlight is that the computational time is invariant to the integration of DERs and data lengths.
}

\subsection{Distribution Grids with DER Integration}
The penetration of DERs has grown significantly during last decade and will keep increasing in the future. As discussed earlier, the high penetration of DER will lead to a deeply unbalanced distribution grid. To evaluate the proposed algorithm with integrated DERs, $20\%$ of residents in the distribution networks are selected to install rooftop photovoltaic (PV) systems. The profiles of hourly power generation are obtained from NREL PVWatts Calculator, an online simulator that estimates the PV power generation based on weather history of PG\&E service zone and the physical parameters of a $5$kW PV panel in residential levels \cite{dobos2014pvwatts}. The power factor is fixed as $0.90$ lagging, which satisfies the regulation of many U.S. utilities \cite{ellis2012review} and IEEE standard \cite{ieee2014guide}. \revv{Similar to the simulations without DERs, we use one year's data ($8760$ samples) to estimate topology.}

\begin{table}[htbp]
\caption{Topology Estimation Error Rate with $20\%$ PV Penetrations.}
\label{tab:mismatch_der}	
\centering
\begin{tabular}{|c||c|c||c|c|}
	\hline
	& \multicolumn{2}{c||}{Proposed Method} & \multicolumn{2}{c|}{Modified Single-Phase} \\
	& \multicolumn{2}{c||}{} & \multicolumn{2}{c|}{Method} \\
	\hline
	System & $\Vpnz$ & $|\Vpnz|$ & $\Vpnz$ & $|\Vpnz|$ \\
	\hline
	37-bus & $0.00\%$ & $0.00\%$ & $8.33\%$ & $11.11\%$ \\
	\hline
	123-bus & $0.00\%$ & $0.00\%$ & $1.64\%$ & $1.64\%$ \\
	\hline
\end{tabular}
\end{table}

The error rates of grid topology estimation with the rooftop PVs integration are presented in Table~\ref{tab:mismatch_der} using noiseless measurements. Our algorithm does not have any performance degradation with DER integration. Also, the modified single-phase method still performs worse than the proposed method. Compared with the systems without DER, the modified single-phase method has performance degradation.

In order to further validate the proposed algorithm, the DER penetration level is progressively increased from $0\%$ to $100\%$. For each penetration level, Monte Carlo simulation is performed over $1000$ iterations. Fig.~\ref{fig:diff_level_DER} plots the error rate with different levels of DER penetration using the voltage magnitude $|\Vpnz|$ only. Besides $60\%$ penetration of DERs, the error rates do not change with the growth of DER installation rate, which highlights the reliability of the proposed algorithm. 12 iterations of Monte Carlo simulation have errors when the DER penetration level is $60\%$. The incorrect identified branches are bus $57$ - bus $58$ and bus $58$ - bus $59$. The loads with PV integrations on these three buses are similar and the line impedances are identical. This causes that the voltage profiles of these buses are similar. Our algorithm is hard to identify the correct connectivity. However, such an instance requires the same impedance and same voltage profiles. This rarely happens in practice. As the penetration level increase, this instance is not observed again and the proposed algorithm can correctly identify these two branches.

\putFig{diff_level_DER}{Error rates with different levels of DER penetration using $|\Vpnz|$.}{1\linewidth}
\subsection{Distribution Grids with Incorrect Phase Labels}
\label{sec:incorrect_phase}
In some distribution grids, up to $10\%$ of the phase labels are incorrect or unknown. Therefore, this section validates our algorithm on the 123-bus system with incorrect phase labels. To simulate the incorrect phase labels, several buses are randomly chosen and switch their phase $a$ voltage measurements to data of either phase $b$ voltage or phase $c$ voltage.

Table~\ref{tab:incorrect_label} shows the error rates with different percentages of incorrect phase labels using voltage magnitude only and highlights that our algorithm is insensitive to incorrect bus phase label. As discussed previously, the incorrect phase labels is a permutation of random variables in $|\mathbf{V}_i^{pnz}|$ and do not affect $I(|\mathbf{V}_i^{pnz}|;|\mathbf{V}_k^{pnz}|)$. If the modified single-phase approach is used, the error rate increases significantly because the mutual information is computed for incorrect bus pairs. For the 123-bus system, the error rate is $11.7\%$ when $10\%$ buses have incorrect phase labels.

\begin{table}[htbp]
	\caption{Error Rate with Incorrect Phase Labels using $|\Vpnz|$.}
	\centering
	\begin{tabular}{|c|c|c|}
	\hline
	Percentage of Bus with & Error Rate & Error Rate \\
	Incorrect Phase Labels & Average & Standard Deviation \\
	\hline
	2\% & 0\% & 0\% \\ \hline
	6\% & 0\% & 0\% \\ \hline
	10\% & 0\% & 0\% \\ \hline
	14\% & 0\% & 0\% \\ \hline
	18\% & 0\% & 0\% \\	\hline
	20\% & 0\% & 0\% \\
	\hline
	\end{tabular}
	\label{tab:incorrect_label}
\end{table}

\subsection{Sensitivity Analysis}
\subsubsection{Sensitivity to Data Lengths}
\label{sec:length}
The proposed algorithm is validated with different data lengths, ranging from $1$ to $360$ days. Fig.~\ref{fig:data_len} illustrates the error rates of the 123-bus system, with and without DER, over different lengths of the PG\&E data set. With $20$ days' measurements ($24\times 20 = 480$ data points), the proposed method can achieve zero error. This result is better than the single-phase system presented in \cite{weng2017distributed}, which requires 30 days' observations. The reason is that at time $n$, our proposed algorithm uses measurements from three phases, which contain more information than the single-phase system. The frequency of distribution grid reconfiguration ranges from hours to weeks \cite{jabr2014minimum,dorostkar2016value}. Section~\ref{sec:resolution} demonstrates that the topology can still be estimated by increasing the sampling frequency of smart meters.

\putFig{data_len}{Error rates with different data lengths.}{1\linewidth}

\subsubsection{Sensitivity to Data Accuracy}
In particles, smart meter measurements are noisy. Thus, it is important to validate our algorithm under different levels of measurement noises. In the U.S., ANSI C12.20 standard (Class 0.5) requires the smart meters to have an error less than $\pm 0.5\%$ \cite{zheng2013smart,ansc12}. Table.~\ref{tab:noise_error} shows the error rates with different noise levels over $20$ iterations in the $123$-bus system with PG\&E data. Compared with the estimation results using perfect measurements, the error rates grow up as the increase in noise levels. These newly introduced errors are around the feeders. For example, bus $251$ and $451$ are both feeders and incorrectly connected with the presence of noise. In real systems, the location of feeder buses are usually known. Therefore, a post-processing can be applied on the topology estimate and remove these unnecessary branches from topology estimate. The updated system is still a radial network. After performing post-processing, the error rate decreases to $1\%$. In Table.~\ref{tab:noise_error}, the standard deviation of error rate is very small and therefore, our algorithm can provide reliable and consistent results with noisy measurements.

\begin{table}[h!]
\caption{Error Rates with Different Voltage Noise Levels in $123$-bus System}
\label{tab:noise_error}
\centering
\begin{tabular}{|c|c|c|}
	\hline
	Noise Level & Error Rate & Error Rate \\
	& Average & Standard Deviation \\
	\hline
	0.01\% & 2.95\% & 0.41\% \\
	\hline 
	0.05\% & 2.91\% & 0.50\% \\
	\hline
	0.1\% & 2.99\% & 0.67\% \\
	\hline
	0.2\% & 3.98\% & 0.77\% \\
	\hline
\end{tabular}
\end{table}

\subsubsection{Sensitivity to Data Patterns}
The ``ADRES-Concept'' Project load profile \cite{Einfalt11,VUT16} is employed to understand our algorithm's sensitivity to load patterns. This data set contains real and reactive power profiles of 30 houses in Upper-Austria. The data were sampled every second over 7 days in summer and 7 days in winter. The voltage data are generated using the 37-bus system. The load profiles are scaled to match the scale of power in the 37-bus system. The resulting multi-phase system is unbalanced.

Fig.~\ref{fig:EU_pattern} compares the error rates using summer and winter load profiles. When there is only one measurement, the proposed algorithm has $200\%$ error rate due to poor estimation of mutual information. The error rate is above $100\%$ because all estimated branches are incorrect and none of the correct branch is found. As more measurements become available, the error rate reduces significantly. Also, our algorithm has a consistent performance in winter and summer. Compared with the results in \cite{weng2017distributed}, the proposed algorithm perfectly estimates the grid topology with shorten time because more information is observed at each time step.

\putFig{EU_pattern}{Error rates with summer and winter load patterns.}{\linewidth}

Another validation of our algorithm is using data set from Pecan Street, which contains hourly load measurements of $345$ houses with PV integrations in Austin, Taxes. The measurements include both power consumption and renewable generation. In Fig.~\ref{fig:pecan_st_diff}, our algorithm requires $16$ hours' measurements to recover the topology of the 37-bus system, which is similar to the ADRES data set. This highlights the robustness of our algorithm.
\putFig{pecan_st_diff}{Error rates on Pecan Street data set.}{\linewidth}

\revv{
In order to better understanding the impacts of ZIP loads and high applicants on the topology estimation, the applicant/device simulation model \cite{chassin2008gridlabd} provided by Gridlab-D is adopted to generate load data. This simulator is based on the thermal data and device configurations. Therefore, compared with the real data provided from PG\&E, a detailed setup of load patterns is possible. For each residential load, multiple devices and applicants (e.g., heating, electric hot water heaters, washer and dryers, cooking, electronic plugs and lights) are installed with various configurations and parameters. We run the simulation on IEEE $123$-bus network with real temperature data from Palo Alto, CA. The simulation is performed on an hourly basis for one year's duration. By applying the proposed algorithm, the topology can still be correctly estimated and the required data lengths are consistent with the results in Section~\ref{sec:length}.
}

\subsubsection{Sensitivity to Data Resolutions}
\label{sec:resolution}
Fig.~\ref{fig:EU_resolution2} illustrates the performance of the proposed algorithm under different sampling frequencies using the ADRES data set. When the sampling period is $1$ minute, about $6$ hours' voltage profile are required to perfectly recover the system. According to \cite{dorostkar2016value}, some distribution grids reconfigure as fast as every $3$ hours. Therefore, the proposed algorithm is suitable for existing systems and real-time operations. If the sampling period is $30$ minutes, $35$ data points ($35 \times 30 \text{ minutes} = 17.5 \text{hours}$) to recover the system topology. This estimation time is only half of the required time in \cite{weng2017distributed}.

\putFig{EU_resolution2}{Error rates with different data resolutions.}{1.1\linewidth}

\section{Conclusions}
\label{sec:con}
This paper proposes a data-driven approach to estimate multi-phase distribution grid topology by utilizing smart meter measurements. Unlike existing approaches, our method does not require the system to be balanced. Also, our method tolerates the errors of bus phase labels. Specifically, the topology estimation problem is formulated as a joint distribution (voltage phasors) approximation problem under the probabilistic graphical model framework. Then, the distribution grid topology estimation is proven to be equivalent to the graphical model estimation problem and propose a mutual information-based maximum weight spanning tree algorithm, which is optimal and efficient. Moreover, our algorithm is extended to the case where only voltage magnitude is available. In addition, as bus phase labels are critical to distribution grid plannings and operations, a simple approach is introduced to correct the error of bus phase labels by utilizing Carson's equations. Finally, the proposed algorithm is validated on IEEE $37$- and $123$-bus systems and compared with the existing single-phase method. Results show that the proposed algorithm outperforms the single-phase method and has robust performances when bus phase labels are incorrect. Our algorithms are also validated under different penetration levels of DERs and conduct the sensitivity analysis. The numerical results are highly accurate and robust in various system configurations.

\section{Appendices}
\subsection{Proof of Lemma~\ref{thm:cond_indept}}
\label{sec:proof_cond_indept}
\begin{proof}
\putFig{indept_proof}{Figure for the proof of Lemma~\ref{thm:cond_indept}.}{0.9\linewidth}
Several cases illustrated in Fig.~\ref{fig:indept_proof} are used to prove Lemma~\ref{thm:cond_indept}. The first step is proving the leaf nodes. In Fig.~\ref{fig:indept_proof}(a), for bus $4$, given $\Vabc_2 = \vabc_2$, $\Vabc_{\mathcal{S}(4)} = \vabc_{\mathcal{S}(4)}$, and $\Vabc_1 = \vabc_1$:
\begin{eqnarray}
	\Iabc_4 &=& \Yabc_{42}\vabc_2 + \Yabc_{44}\Vabc_4, \\
	\Iabc_k &=& \Yabc_{1k}\vabc_1 + \Yabc_{kk}\Vabc_k \quad \forall k \in \mathcal{S}(2).
\end{eqnarray}
Since $\Iabc_4 \perp \Iabc_k$, $\Vabc_4$ and $\Vabc_k$ are conditionally independent for $k \in \mathcal{S}(2)$. This results can be generalized to all leaf buses that share with same grandparent bus ($\Vabc_1$).

In Fig.~\ref{fig:indept_proof}(b), for bus $4$, given $\Vabc_2 = \vabc_2$, $\Vabc_{\mathcal{S}(4)} = \vabc_{\mathcal{S}(4)}$, and $\Vabc_1 = \vabc_1$:
\begin{eqnarray}
	\Iabc_4 &=& \Yabc_{42}\vabc_2 + \Yabc_{44}\Vabc_4, \\
	\Iabc_3 &=& \Yabc_{13}\vabc_1 + \sum_{k\in \mathcal{C}(3)}\Yabc_{3k}\Vabc_k + \Yabc_{33}\Vabc_3, \label{eq:proof_eq22} \\
	\Iabc_k &=& \Yabc_{3k}\Vabc_3 + \Yabc_{kk}\Vabc_k \quad \forall k \in \mathcal{C}(3). \label{eq:proof_eq2}
\end{eqnarray}

Since $\Yabc_{ii} = -\Yabc_{\pa{i}i} - \sum_{k \in \mathcal{C}(i)}\Yabc_{ki}$ and $\Yabc_{ik} = \Yabc_{ki}$, combining (\ref{eq:proof_eq22}) and (\ref{eq:proof_eq2}), the equation becomes
\begin{eqnarray}
&&\Iabc_3 + \sum_{k \in \mathcal{C}(3)} \Iabc_k \nonumber \\
&=& \Yabc_{13}\vabc_1 + \Yabc_{33}\Vabc_3 + \sum_{k\in \mathcal{C}(3)}\Yabc_{3k}\Vabc_k \nonumber \\
&& +\sum_{k\in \mathcal{C}(3)}(\Yabc_{3k}\Vabc_3 + \Yabc_{kk}\Vabc_k) \nonumber \\
&=& \Yabc_{13}\vabc_1 - (\Yabc_{13}\Vabc_3 + \sum_{k \in \mathcal{C}(3)}\Yabc_{3k}\Vabc_3) \nonumber \\
&& + \sum_{k \in \mathcal{C}(3)}(\Yabc_{3k}\Vabc_k + \Yabc_{3k}\Vabc_3 - \Yabc_{3k}\Vabc_k) \nonumber \\
&=& \Yabc_{13}\vabc_1 - \Yabc_{13}\Vabc_3 \label{eq:proof_eq1}
\end{eqnarray}
Given $\Iabc_4 \perp (\Iabc_3 + \sum_{k\in \mathcal{C}(3)} \Iabc_k)$, $\Vabc_4$ and $\Vabc_3$ are conditionally independent. (\ref{eq:proof_eq1}) can be rewritten as an equation of $\Vabc_3$, i.e.,
\begin{equation}
\Vabc_3 = (\Yabc_{13})^{-1}(\Yabc_{13}\vabc_1 - \Iabc_3 -\sum_{k\in \mathcal{C}(3)} \Iabc_k).
\end{equation}
Replacing $\Vabc_3$ in (\ref{eq:proof_eq2}) with the equations above, then, for $k \in \mathcal{C}(3)$,
\begin{equation}
\Yabc_{kk}\Vabc_k+\Yabc_{3k}\vabc_1 = \Iabc_k + \Yabc_{3k}(\Yabc_{13})^{-1}(\Iabc_3 + \sum_{k\in \mathcal{C}(3)} \Iabc_k).
\end{equation}
Given $\Iabc_4$ and $\Iabc_k + \Yabc_{3k}(\Yabc_{13})^{-1}(\Iabc_3 + \sum_{k\in \mathcal{C}(3)} \Iabc_k)$ are independent and $\vabc_1$ is a constant, $\Vabc_4$ and $\Vabc_k$ are conditionally independent for $k \in \mathcal{C}(3)$. When there are more child buses $\Vabc_{\mathcal{C}(3)}$, the same induction method above can be applied to prove the conditional independence. Thus, the proof of Fig.~\ref{fig:indept_proof}(b) can be generalized to prove the conditional independence of a leaf bus and all other buses that are under the same grandparent bus.

Next part proves the lemma for non-leaf buses. In Fig.~\ref{fig:indept_proof}(c), for bus $4$, given $\Vabc_2 = \vabc_2$, $\Vabc_{\mathcal{S}(4)} = \vabc_{\mathcal{S}(4)}$, and $\Vabc_1 = \vabc_1$,
\begin{eqnarray}
	\Iabc_4 &=& \Yabc_{42}\vabc_2 + \Yabc_{44}\Vabc_4 + \sum_{k \in \mathcal{C}(4)}\Yabc_{4k}\Vabc_k, \label{eq:proof_eq3} \\
	\Iabc_k &=& \Yabc_{4k}\Vabc_4 + \Yabc_{kk}\Vabc_k, \forall k \in \mathcal{C}(4), \label{eq:proof_eq4} \\
	\Iabc_l &=& \Yabc_{1l}\vabc_1 + \sum_{m \in \mathcal{C}(l)}\Yabc_{lm}\Vabc_m + \Yabc_{ll}\Vabc_l,\label{eq:proof_eq5} \\
	\Iabc_m &=& \Yabc_{lm}\Vabc_l + \Yabc_{mm}\Vabc_m, \label{eq:proof_eq6}
\end{eqnarray}
where $l \in \mathcal{S}(2)$ and $m\in \mathcal{C}(l)$. Combining (\ref{eq:proof_eq3}) and (\ref{eq:proof_eq4}) yields
\begin{equation}
\label{eq:proof_eq7}
	\Iabc_4 + \sum_{k \in \mathcal{C}(4)} \Iabc_k = \Yabc_{42}\vabc_2 - \Yabc_{42}\Vabc_4.
\end{equation}
For every $l$ in $\mathcal{S}(2)$, combining (\ref{eq:proof_eq5}) and (\ref{eq:proof_eq6}) yields the following equation:
\begin{equation}
\label{eq:proof_eq8}
	\Iabc_l + \sum_{m \in \mathcal{C}(l)} \Iabc_m = \Yabc_{1l}\vabc_1 - \Yabc_{1l}\Vabc_l.	
\end{equation}
Applying the strategy in Fig.~\ref{fig:indept_proof}(b) to (\ref{eq:proof_eq7}) and (\ref{eq:proof_eq8}) could prove that $\Vabc_4$ and $\Vabc_l$ are conditionally independent. Also, $\Vabc_4$ and $\Vabc_{m}$ are proved to be conditionally independent for $m \in \mathcal{C}(l)$ by combining (\ref{eq:proof_eq6}) and (\ref{eq:proof_eq8}). The results in Fig.~\ref{fig:indept_proof}(c) can be generalized to all non-leaf buses. Using the results in Fig.~\ref{fig:indept_proof}, Lemma~\ref{thm:cond_indept} is proved to hold.
\end{proof}

\subsection{Proof of Theorem~\ref{thm:MI_sum}}
\label{app:proof_MI_sum}
\begin{proof}
Recall the definition \cite{cover2012elements}, the KL divergence is expressed as
\begin{align}
& D(P(\Vabc_{\mathcal{M}^+})\|Q(\Vabc_{\mathcal{M}^+}; \boldsymbol{\Theta})) \nonumber \\
=& E_{P(\Vabc_{\mathcal{M}^+})}\log\frac{P(\Vabc_{\mathcal{M}^+})}{Q(\Vabc_{\mathcal{M}^+}; \boldsymbol{\Theta})} \nonumber \\
=&\int P(\Vabc_{\mathcal{M}^+}) \log \frac{P(\Vabc_{\mathcal{M}^+})}{Q(\Vabc_{\mathcal{M}^+}; \boldsymbol{\Theta})} \nonumber \\
=& \int P(\Vabc_{\mathcal{M}^+}) \left(\log P(\Vabc_{\mathcal{M}^+}) - \log Q(\Vabc_{\mathcal{M}^+};\boldsymbol{\Theta})\right) \nonumber \\
=& \int P(\Vabc_{\mathcal{M}^+}) \log P(\Vabc_{\mathcal{M}^+}) \nonumber \\
& - \int P(\Vabc_{\mathcal{M}^+})\log Q(\Vabc_{\mathcal{M}^+};\boldsymbol{\Theta}).
\end{align}

Because of Lemma~\ref{lemma:one_hop_indept}, the radial structured PDF $Q(\Vabc_{\mathcal{M}^+}; \boldsymbol{\Theta})$ can be expressed as a conditional distribution $\prod_{i = 1}^M P(\Vabc_i|\Vabc_{\boldsymbol{\Theta}_i})$. Then,
\begin{align}
& D(P(\Vabc_{\mathcal{M}^+})\|Q(\Vabc_{\mathcal{M}^+}; \boldsymbol{\Theta})) \nonumber \\
=& \int P(\Vabc_{\mathcal{M}^+}) \log P(\Vabc_{\mathcal{M}^+}) \nonumber \\
& - \int P(\Vabc_{\mathcal{M}^+})\log \prod_{i = 1}^M P(\Vabc_i|\Vabc_{\boldsymbol{\Theta}_i}) \nonumber \\
=& \int P(\Vabc_{\mathcal{M}^+}) \log P(\Vabc_{\mathcal{M}^+}) \nonumber \\
& - \int P(\Vabc_{\mathcal{M}^+})\sum_{i = 1}^M \log P(\Vabc_i|\Vabc_{\boldsymbol{\Theta}_i}),
\end{align}
where $P(\Vabc_1|\Vabc_0) = P(\Vabc_1) $ due to the fact that $\Vabc_0$ is a constant. By following the definition of conditional probability and adding $P(\Vabc_i)$ into the denominator, onecan simplify the equation above as
\begin{align}
& D(P(\Vabc_{\mathcal{M}^+})\|Q(\Vabc_{\mathcal{M}^+}; \boldsymbol{\Theta})) \nonumber \\
=& \int P(\Vabc_{\mathcal{M}^+}) \log P(\Vabc_{\mathcal{M}^+}) \nonumber \\
& - \int P(\Vabc_i|\Vabc_{\boldsymbol{\Theta}_i})\sum_{i=1}^M\log\frac{P(\Vabc_i, \Vabc_{\boldsymbol{\Theta}_i})}{P(\Vabc_i)P(\Vabc_{\boldsymbol{\Theta}_i})} \nonumber \\
& - \sum_{i=1}^M \int P(\Vabc_i) \log P(\Vabc_i) \nonumber \\
=& - H(\Vabc_{\mathcal{M}^+}) -\sum_{i=1}^M I\left(\Vabc_i; \Vabc_{\boldsymbol{\Theta}_i}\right) + \sum_{i=1}^M H(\Vabc_i) .
\end{align}
The last equality is due to the definitions of entropy, i.e.,
\begin{equation}
\label{eq:def_entropy}	
H(\Vabc_i) = -\int P(\Vabc_i)\log P(\Vabc_i),
\end{equation}
and mutual information, i.e.,
\begin{eqnarray}
	&& I\left(\Vabc_i; \Vabc_{\boldsymbol{\Theta}_i}\right) \nonumber \\
	&=& \int P(\Vabc_i,\Vabc_{\boldsymbol{\Theta}_i})\log\frac{P(\Vabc_i,\Vabc_{\boldsymbol{\Theta}_i})}{P(\Vabc_i)P(\Vabc_{\boldsymbol{\Theta}_i})}.\label{eq:def_mutual_info}
\end{eqnarray}

Thus, to minimize the KL-divergence between $P(\Vabc_{\mathcal{M}^+})$ and $Q(\Vabc_{\mathcal{M}^+};\boldsymbol{\Theta})$, one can choose the $M-1$ edges to maximize $\sum_{i=1}^M I\left(P(\Vabc_i); P(\Vabc_{\boldsymbol{\Theta}_i})\right)$. The entropy term $\sum_{i=1}^M H(\Vabc_i) - H(\Vabc_{\mathcal{M}^+})$ is irrelevant with the topology structure of distribution grid and is excluded in the final optimization problem. Therefore, minimizing $D(P(\Vabc_{\mathcal{M}^+})\|Q(\Vabc_{\mathcal{M}^+}; \boldsymbol{\Theta}))$ is equivalent to solving the following optimization problem:
\begin{equation}
\widehat{\boldsymbol{\Theta}} = \argmax_{\boldsymbol{\Theta} \subset \mathcal{M}^+} \sum_{i=1}^M I\left(\Vabc_i; \Vabc_{\boldsymbol{\Theta}_i}\right).
\end{equation}
\end{proof}

\bibliographystyle{iet}
\bibliography{ref}

\end{document}